\tikzset{external/only named=true}
\pgfplotsset{compat=1.9}
\pgfplotsset{select coords between index/.style 2 args={
    x filter/.code={
        \ifnum\coordindex<#1\fi
        \ifnum\coordindex>#2\fi
    }
}}
\crefname{equation}{}{}
\crefname{remark}{remark}{remarks}
\Crefname{remark}{Remark}{Remarks}
\newtheorem{theorem}{Theorem}
\theoremstyle{plain}
\theoremstyle{plain}
\newtheorem{definition}[theorem]{Definition}
\theoremstyle{plain}
\newtheorem{remark}[theorem]{Remark}
\theoremstyle{plain}
\theoremstyle{plain}
\newcommand{\dx}{\,dx}
\newcommand{\ds}{\,ds}
\newcommand{\hp}{s}
\let\thetaori\theta
\newcommand{\Rey}{\operatorname{\mbox{\textit{Re}}}}
\newcommand{\Eo}{\operatorname{\mbox{\textit{Eo}}}}
\newcommand{\Ca}{\operatorname{\mbox{\textit{Ca}}}}
\newcommand{\At}{\operatorname{\mbox{\textit{At}}}}
\newcommand{\Cn}{\operatorname{\mbox{\textit{Cn}}}}
\newcommand{\Pe}{\operatorname{\mbox{\textit{Pe}}}}
\journal{Journal of Computational Physics}
\begin{document}

\begin{frontmatter}
\title{Comparison of Energy Stable Simulation of Moving Contact Line Problems using a
Thermodynamically Consistent Cahn--Hilliard Navier--Stokes Model\tnoteref{grants}
}

\tnotetext[grants]{The first and third author acknowledge the North-German Supercomputing Alliance (HLRN) for providing HPC resources 
that have contributed to the research results reported in this paper and 
thank the German Research Foundation (DFG) for the financial support within the project RE 1705/16-1.
The second author gratefully acknowledges the support by the German Research Foundation (DFG) 
through the International Research Training Group
IGDK 1754 ``Optimization and Numerical Analysis for Partial Differential Equations with Nonsmooth Structures".}

\author[TUB]{Henning Bonart \corref{cor}}
\ead{henning.bonart@tu-berlin.de} 
\cortext[cor]{Corresponding author}

\author[TUM]{Christian Kahle}
\ead{christian.kahle@ma.tum.de}

\author[TUB]{Jens-Uwe Repke}
\ead{jens-uwe.repke@tu-berlin.de}

\address[TUB]{Process Dynamics and Operations Group, Technische Universit\"at Berlin,\\ 10623 Berlin, Germany}
\address[TUM]{Center for Mathematical Sciences, Technische Universit\"at M\"unchen,\\ 85748 Garching bei M\"unchen, Germany }

\begin{abstract}
Liquid droplets sliding along solid surfaces are a frequently observed phenomenon in nature, e.g., raindrops on a leaf,
and in everyday situations, e.g., drops of water in a drinking glass. 
To model this situation, we use a phase field approach. 
The bulk model is given by the thermodynamically consistent Cahn--Hilliard~Navier--Stokes model from
[Abels~et~al., Math.~Mod.~Meth.~Appl.~Sc., 22(3), 2012]. 
To model the contact line dynamics we apply the generalized Navier boundary condition for the fluid
and the dynamically advected boundary contact angle condition for the phase field as derived in
[Qian~et~al., J.~Fluid~Mech., 564, 2006].
In recent years several schemes were proposed to solve this model numerically.
While they widely differ in terms of complexity, they all fulfill certain basic
properties when it comes to thermodynamic consistency.
However, an accurate comparison of the influence of the schemes on the moving contact line is rarely found.
Therefore, we thoughtfully compare the quality of the numerical results obtained with three different schemes and two different bulk energy potentials.
Especially, we discuss the influence of the different schemes on the apparent contact angles of a sliding droplet. 
\end{abstract}

\begin{keyword}
Multiphase flows \sep Drop phenomena \sep Contact line dynamics \sep Phase field modeling 
\MSC 35Q30 \sep 35Q35 \sep 76D05 \sep 76M10 \sep 76T99
\end{keyword}
 
\end{frontmatter}

\section{Introduction}
\label{sec:I}

Liquid droplets sliding along solid surfaces are a frequently observed phenomenon in nature, e.g., raindrops on a leaf,
and in everyday situations, e.g., drops of water in a drinking glass.
Furthermore, sliding droplets (and consequently the suppression of those) 
are crucial in many industrial applications such as coating or painting 
and separation or reaction processes involving multiple phases and thin liquid films.
The position where the interface between the sliding droplet and the surrounding fluid intersects 
the solid surface is the moving contact line (or contact point if a two dimensional problem is observed).
For details about liquids on surfaces and moving contact lines see the reviews~\cite{Bonn2009, Snoeijer2013} and the references therein.
In a continuum approach, applying the common no-slip boundary condition at the solid surface close to the contact line leads to a non-physical,
logarithmically diverging energy dissipation. 
One possibility to circumvent this difficulty is the coupling of the incompressible Navier--Stokes equations 
with the Cahn--Hilliard equation~\cite{Jacqmin2000}. 
This phase field method models the interface between the fluids with a diffuse interface 
of positive thickness and describes the distribution of the different fluids by a smooth indicator function.
Especially, the Cahn--Hilliard equation allows the contact line to move
naturally on the solid surface due to a diffusive flux across the interface, which is driven by the gradient of the chemical potential. 
Furthermore, the phase field method is able to calculate topological changes like breakup of 
droplets or merging interfaces~\cite{Anderson1998}.
For example in experiments by~\cite{Carlson2012,Eddi2013}, it is found that during
the rapid spreading of a droplet the contact angle can differ from the equilibrium angle given by Young's equation.
To allow for nonequilibrium contact angles, \cite{Jacqmin2000} proposes a
relaxation of the static contact angle boundary condition, see \Cref{ssec:model}, 
and \cite{2006_QianWangShen_Variational_MovingContactLine__BoundaryConditons} 
extends this approach to include the slip at the contact line stemming from the uncompensated Young stress.

In \cite{AbelsGarckeGruen_CHNSmodell} a thermodynamically consistent Cahn--Hilliard~Navier--Stokes phase field model is proposed 
to describe the dynamics of the  two phases in the bulk domain.
It is valid also for different densities of the involved fluids, 
but specific contact line dynamics are not included.
Recently, several numerical schemes for solving this system have been proposed,
see for example,
\cite{
GarckeHinzeKahle_CHNS_AGG_linearStableTimeDisc,
Gruen_Klingbeil_CHNS_AGG_numeric,
GruenMetzger__CHNS_decoupled,
Aland__time_integration_for_diffuse_interface,
Tierra_Splitting_CHNS}.
All these schemes are thermodynamically consistent in the sense, that they mimic
the energy law from \cite{AbelsGarckeGruen_CHNSmodell} in the time discrete or
even in the fully discrete setting. They range from fully coupled and nonlinear
to decoupled and linear, where decoupled means, that the Navier--Stokes and the
Cahn--Hilliard equations are solved sequentially.

These schemes are extended to the Cahn--Hilliard~Navier--Stokes system with 
moving contact lines in various papers.
Here the concepts from the aforementioned papers for the discretization of the
bulk equations are straightforwardly applied. For the case of equal densities, schemes are proposed, e.g., in \cite{2011-HeGlowinskiWang-LeastSquaresCHNSMCL,
2012-GaoWang-gradientStableSchemePhaseFieldMCL,
2015-ShenYangYu-EnergyStableSchemesForCHMCL-Stabilization,
AlandChen__MovingContactLine}
and for the case of different densities in 
\cite{YuYang_MovingContactLine_diffDensities,
GruenMetzger__CHNS_decoupled}.
The model from \cite{AbelsGarckeGruen_CHNSmodell} contains an
additional flux term in the momentum equation, that renders the model thermodynamically consistent.
This term is often neglected, see e.g., \cite{ShenYang_CHNS_DingSpelt_Consistent}.
For the resulting model several discretization schemes are proposed and we refer to the references in
\cite{YuYang_MovingContactLine_diffDensities} for details. 
In all these simulations involving moving
contact lines a polynomial bulk energy potential is applied.
In contrast, we include a double obstacle potential, which is subsequently relaxed, see \cite{HintermuellerHinzeTber}.
In \cite{GarckeHinzeKahle_CHNS_AGG_linearStableTimeDisc,2017_SPP1506_book_AlaHKN__CoparativeSurfactants}
in a numerical benchmark setting the results with this kind of 
energy are typically closer to sharp interface numeric than with the polynomial potential.

To prepare future research on the passive control of droplets sliding on structured or chemically patterned surfaces, we extend the work of
\cite{GarckeHinzeKahle_CHNS_AGG_linearStableTimeDisc} in this paper to the case of moving
contact line dynamics and compare the numerical results with the
corresponding  decoupled scheme from, e.g., \cite{GruenMetzger__CHNS_decoupled} and a
fully linear scheme, that both uses decoupling and stabilization as in
\cite{2015-ShenYangYu-EnergyStableSchemesForCHMCL-Stabilization}. 
We test both the polynomially and the relaxed double-obstacle bulk energy potential,
so that in total we compare six different
combinations of bulk energy potentials and solution schemes.

The remainder of the paper is organized as follows. 
In the second part of the introduction, \Cref{ssec:model}, 
we introduce the continuous model as well as the bulk energy potentials and the contact line energies.
Afterwards, we derive a weak formulation in \Cref{sec:F} and the numerical schemes in \Cref{sec:S}. 
In \Cref{ssec:N:rising_bubble} we compare the different combinations at first in the 
bulk without any contact line. 
Finally, we compare simulation results of sliding droplets on 
inclined surfaces to investigate the accuracy and efficiency of the linearization 
and decoupling strategies as well as the bulk energy potentials
for moving contact line problems in \Cref{ssec:N:sliding_droplet}. 
We conclude our work in \Cref{sec:conclusion}.

\subsection{Model}
\label{ssec:model}
In the fluid domain we consider the thermodynamically consistent model for the simulation of
two-phase flow presented in
\cite{AbelsGarckeGruen_CHNSmodell},
in the variant for nonlinear density
functions proposed in \cite[Eq. 1.10]{AbelsBreit_weakSolution_nonNewtonian_DifferentDensities}.
To model the contact line dynamics we use generalized Navier boundary conditions
for the velocity field together with dynamically advected boundary conditions for the phase field as proposed in 
\cite{2006_QianWangShen_Variational_MovingContactLine__BoundaryConditons}.

In strong form the model reads as follows.
Let $\Omega \subset \mathbb{R}^{d}$ with $d \in\{2,3\}$ denote an open,
polygonally/polyhedrally bounded Lipschitz domain and $I = (0,T]$ with $0<T<\infty$ denote a
time interval. The outer unit normal on $\partial\Omega$ is $\nu_\Omega$. At time $t \in I$
the primal variables are given by the velocity field $v$, the pressure field
$p$, the phase field $\varphi$ and the chemical potential $\mu$.
They satisfy
the following system of equations
\begin{align}
\rho\partial_t v + ((\rho v + J)\cdot\nabla) v + R\frac{v}{2}
-\mbox{div}\left(2\eta Dv\right) + \nabla p &= -\varphi\nabla \mu + \rho g
&& \mbox{ in } \Omega, \label{eq:M:1_NS1}\\
-\mbox{div}(v) &= 0 
&& \mbox{ in } \Omega,\label{eq:M:2_NS2}\\
\partial_t \varphi + v \cdot\nabla \varphi - b\Delta \mu &= 0 
&& \mbox{ in } \Omega,\label{eq:M:3_CH1}\\
-\sigma\epsilon\Delta \varphi + \frac{\sigma}{\epsilon}W^\prime(\varphi) &= \mu 
&& \mbox{ in } \Omega,\label{eq:M:4_CH2}\\
v \cdot \nu_\Omega &= 0 
&& \mbox{ on } \partial\Omega, \label{eq:M:5_NS_BC_1}\\
[2\eta Dv \nu_\Omega + l(\varphi)v_{tan} -  L(\varphi)\nabla \varphi]
\times \nu_\Omega &= 0 
&& \mbox{ on } \partial\Omega, \label{eq:M:6_NS_BC}\\
rB + L(\varphi)&=0 
&& \mbox{ on } \partial\Omega,\label{eq:M:7_CH_BC}\\
\nabla \mu \cdot\nu_\Omega &= 0 
&&\mbox{ on } \partial\Omega,
\label{eq:M:8_mu_neumann}
\end{align}
where we abbreviate $J := -b\frac{\partial\rho}{\partial\varphi}\nabla \mu$,
$R:= -b\nabla\frac{\partial\rho}{\partial\varphi}\cdot\nabla \mu$,
$B := \partial_t\varphi + v \cdot \nabla \varphi$,
$ L:= \sigma\epsilon \nabla\varphi\cdot\nu_\Omega + \gamma^\prime(\varphi)$.
The gravitational acceleration is denoted by $g$ and we abbreviate 
$2Dv := \nabla v + (\nabla v)^t$.
The function $W(\varphi)$ denotes a dimensionless potential
of double-well type, with two strict minima at
$\pm 1$.
 We refer to \Cref{rm:M:freeEnergies} for a discussion of possible choices for $W$.
We formulate \cref{eq:M:1_NS1} with a shifted pressure variable
$p = p^{phys} - \mu\varphi$, where $p^{phys}$ denotes the physical pressure.

The contact line energy is denoted by $\gamma$, see \Cref{rm:M:ContactEnergy}.
The strictly positive, constant parameters for the equations in $\Omega$ are given by the mobility $b >0$, the scaled surface tension $\sigma$, see
\Cref{rm:M:freeEnergies}, and the interfacial thickness parameter $\epsilon$. 
The constant mobility is used for simplicity but the following is also valid for
mobilities that depend on $\varphi$.

The (nonlinear) density function is denoted by $\rho \equiv \rho(\varphi) > 0$ and
satisfies $\rho(-1) = \rho_1$ and $\rho(1) = \rho_2$, with $\rho_2 > \rho_1 > 0$
denoting the constant densities of the two involved fluids. 
The (nonlinear) 
viscosity function is $\eta \equiv \eta(\varphi) > 0$ and satisfies $\eta(-1) =
\eta_1$ and $\eta(1) = \eta_2$, with $\eta_1,\eta_2$ denoting the viscosities of
the involved fluids.

\begin{remark}[Nonlinear density and viscosity]
In general there is no quantitative upper bound available for $\varphi$ and thus 
in particular $|\varphi| > 1$ is possible.
Thus a linear relation between $\varphi$ and $\rho$ can lead to negative densities in practice.
This might appear for especially large density ratios, compare for example \cite[Rem. 4.1]{GruenMetzger__CHNS_decoupled}.
Note, that $\varphi$ can be proven to be bounded in $L^\infty$ if $W^{\prime\prime}$ is uniformly bounded, see e.g. \cite{1995_CaffarelliMulder_LinftyForCahnHilliard}.

It is a common approach to
	cut $\varphi$ when inserting it into the linear function for $\rho$, 
	see e.g. \cite{YuYang_MovingContactLine_diffDensities}. 
	This leads to a nonsmooth relation between $\varphi$ and $\rho$. 
	However, as we require differentiability of $\rho$ to define $R$ and $J$ this is not admissible here.
A second approach is to clip $\rho$ of at some positive value, 	
	see e.g. \cite{GruenMetzger__CHNS_decoupled,Gruen_convergence_stable_scheme_CHNS_AGG}.
	This leads to a uniform bound on $\rho$ based on the Atwood number $\At = \frac{\rho_2 - \rho_1}{\rho_2 + \rho_1}$.
	Here we use the latter approach and define $\rho$ as the following smooth, monotone and strictly positive function
\begin{align}
  \rho(\varphi) = 
  \begin{cases}
\frac{1}{4}\rho_1 &     \mbox{ if } \varphi \leq -\At^{-1},\\
\frac{1}{\rho_1}\left( \frac{\rho_2-\rho_1}{2}\varphi + \frac{\rho_2+\rho_1}{2} \right)^2 + \frac{1}{4}\rho_1
     & 	\mbox{ if } -\At^{-1} < \varphi < -1-\frac{ \rho_1}{\rho_2-\rho_1}, \\
\frac{ \rho_2- \rho_1}{2}\varphi + \frac{\rho_2+\rho_1}{2}
    & \mbox{ if } -1-\frac{ \rho_1}{\rho_2-\rho_1} 
    \leq \varphi \leq 
    1+\frac{\rho_1}{\rho_2-\rho_1},\\ -\frac{1}{\rho_1}\left( \frac{\rho_2-\rho_1}{2}\varphi - \frac{\rho_2+\rho_1}{2} \right)^2 + \rho_2 + \frac{3}{4}\rho_1
      & \mbox{ if } 1+\frac{\rho_1}{\rho_2-\rho_1} < \varphi < \At^{-1},\\
\rho_2 + \frac{3}{4}\rho_1 & \mbox{ if }   \At^{-1} \leq \varphi.
\end{cases}\label{eqn:cutoff_phi}
\end{align} 
For a discussion we refer to \cite[Rem. 2.1]{Gruen_convergence_stable_scheme_CHNS_AGG}.
The nonlinear viscosity $\eta(\varphi)$ can be defined analogously. 

We note, that the total mass $\int_\Omega \rho(\varphi)\dx$ is only conserved if $\rho(\varphi)$ is a linear 
function on the (a-priori unknown) image of $\varphi$, 
see e.g., \cite[Rem. 1]{GarckeHinzeKahle_CHNS_AGG_linearStableTimeDisc}, 
while $\int_\Omega \varphi\dx$ is a conserved quantity, 

\end{remark}

As boundary data we use generalized Navier boundary conditions for
the velocity field and dynamically advected contact angle boundary
conditions for the two-phase equation, see
\cite[Eq. 4.4, Eq. 4.5]{2006_QianWangShen_Variational_MovingContactLine__BoundaryConditons}.
Here $\gamma$ denotes the fluid-solid interfacial free energy, see 
\cite[Sec. 4]{2006_QianWangShen_Variational_MovingContactLine__BoundaryConditons},
$l(\varphi)$ is a slip coefficient for the generalized Navier boundary condition applied to the
tangential part of the velocity 
$v_{tan} := v - (v\cdot \nu_\Omega)\nu_\Omega$,
while $L(\varphi)\nabla \varphi \times \nu_\Omega$ is the uncompensated Young stress and $L$ is the chemical potential at
the solid surface. 
The static contact angle is denoted by $\theta_s$ and $r \geq 0$ is a phenological parameter allowing for nonequilibrium at the contact line.
For $r\equiv 0$ \cref{eq:M:7_CH_BC} reduces to $\sigma\epsilon \nabla\varphi\cdot\nu_\Omega = - \gamma^\prime(\varphi)$, 
which means, that a static contact angle at the interface is assumed.
Furthermore, for $\gamma^\prime(\varphi) \equiv 0$ (or rather $\theta_s \equiv 90\degree$, see \Cref{rm:M:ContactEnergy}), \cref{eq:M:7_CH_BC} further simplifies to 
$\nabla\varphi\cdot\nu_\Omega = 0$, which is a no-flux condition for $\varphi$ at the solid surface.
The no-slip condition for $v$ is obtained from \cref{eq:M:6_NS_BC} by $L\equiv 0$ and $l \rightarrow \infty$ (or rather the slip length $l_s \equiv 0$, see \Cref{rm:bc_parameters}).

Concerning the existence of solutions to \cref{eq:M:1_NS1,eq:M:2_NS2,eq:M:3_CH1,eq:M:4_CH2,eq:M:5_NS_BC_1,eq:M:8_mu_neumann} 
together with no-slip for $v$ and a homogeneous Neumann (or no-flux) boundary condition for $\varphi$ as well as with different assumptions on $b$ and $W$, we refer
to \cite{AbelsDepnerGarcke_CHNS_AGG_exSol,
AbelsDepnerGarcke_CHNS_AGG_exSol_degMob,
AbelsBreit_weakSolution_nonNewtonian_DifferentDensities,
Gruen_convergence_stable_scheme_CHNS_AGG}. 
For the boundary conditions
considered here we are not aware of such results, 
but refer to 
\cite{2016-GalGrasselliMiranville-CHNSMCL-EqualDensityExSol}
for the Cahn--Hilliard Navier--Stokes system with equal densities, 
to
\cite{2017_ColliGilardiSprekels_CH_with_dynamicBoundary}
for analytical results for the Cahn--Hilliard system with dynamic boundary
conditions, 
and to \cite{GruenMetzger__CHNS_decoupled} for a Cahn--Hilliard~
Navier--Stokes model with dynamical contact angle condition, but no-slip
condition for the Navier--Stokes equation.
Concerning sharp interface limits, we refer to \cite{AbelsGarckeGruen_CHNSmodell} for the bulk model 
with homogeneous boundary conditions.
Sharp interface analysis for the  model with equal densities including contact line dynamics is available in 
\cite{2018-XuDiHu-SharpInterfaceLimit-NavierSlipBoundary}.

For \cref{eq:M:1_NS1,eq:M:2_NS2,eq:M:3_CH1,eq:M:4_CH2,eq:M:5_NS_BC_1,eq:M:8_mu_neumann} together with no-slip for $v$ and no-flux for $\varphi$, several 
thermodynamically consistent discretization schemes were proposed in the last years.
Here, we refer to \cite{Tierra_Splitting_CHNS,
Gruen_Klingbeil_CHNS_AGG_numeric,GarckeHinzeKahle_CHNS_AGG_linearStableTimeDisc}.
Especially in \cite{GarckeHinzeKahle_CHNS_AGG_linearStableTimeDisc} the influence
of spatial adaptivity on the fully discrete energy law is discussed.
We further refer to \cite{Aland__time_integration_for_diffuse_interface}, where the 
benefit of using fully coupled schemes is shown numerically, and to \cite{GonzalesTierra_linearSchemes_CH}
for an extensive discussion of several discretization schemes for the bulk energy potential $W^{poly}$.
For the full model \cref{eq:M:1_NS1}--\cref{eq:M:8_mu_neumann} thermodynamically consistent
schemes are for example proposed in \cite{AlandChen__MovingContactLine} for the case of
constant density, and in \cite{YuYang_MovingContactLine_diffDensities} for the general case.
The case with no-slip boundary condition for $v$ and dynamically advected boundary condition for $\varphi$ is
numerically and analytically considered in \cite{GruenMetzger__CHNS_decoupled}.

\begin{remark}[Bulk energy potentials]
\label{rm:M:freeEnergies}
Throughout this work we consider polynomially bounded potentials for $W$.
To state the precise assumptions we split $W = W_+ + W_-$ with $W_+$ denoting
the convex part of $W$ and $W_-$ denoting the concave part. We assume that
$W:\mathbb R \to \mathbb R$ is continuously differentiable and that
$W$ and its derivatives $W_+^\prime$ and $W_-^\prime$ are polynomially bounded,
i.e., there exists $C>0 $ such that 
\begin{align*}
|W(\varphi)| \leq C(1+|\varphi|^4),\quad
|W_+^\prime(\varphi)| \leq C(1+|\varphi|^3), \quad
|W_-^\prime(\varphi)| \leq C(1+|\varphi|^3).
\end{align*}
Note, that these bounds on the polynomial degree might be relaxed, see
\cite[(A3)]{GarckeHinzeKahle_CHNS_AGG_linearStableTimeDisc}, and that these
assumptions are used to show the existence of discrete solutions.

These assumptions are for 
example fulfilled by the commonly used polynomial potential 
\begin{align*}
W^{poly}(\varphi) := \frac{1}{4}(1-\varphi^2)^2,
\quad
W^{poly_2}(\varphi) := 
\begin{cases}
\frac{1}{4}(1-\varphi^2)^2 & \mbox{if } |\varphi|\leq 1,\\
(|\varphi|-1)^2 & \mbox{else,}
\end{cases} 
\end{align*}
where $W^{poly_2}$ is a modification of $W^{poly}$ that guarantees an $L^\infty$ bound on 
$\varphi$, see \cite{1995_CaffarelliMulder_LinftyForCahnHilliard}.

Another potential that fulfills the assumptions is 
\begin{align*}
 W^{\hp}(\varphi) := \frac{1}{2}\left( 1 - (\xi\varphi)^2 
 + \hp\lambda(\xi \varphi)^2 \right) + \theta,
\end{align*}
where $\lambda(x) := \max(0,x-1) + \min(0,x+1) $, $\theta := \frac{1}{2(\hp-1)}$ 
and $\xi := \frac{\hp}{\hp-1}$ 
are chosen such that $W(\pm1) \equiv 0$ are the
two minima of $W^{\hp}$. Here 
$\hp\gg1$ is a penalization parameter.
It appears as Moreau--Yosida relaxation of the double obstacle 
potential $W^{\infty}$, see  \cite{BloweyElliott_I,HintermuellerHinzeTber}.
In a synthetic rising bubble benchmark, \cite{Hysing_Turek_quantitative_benchmark_computations_of_two_dimensional_bubble_dynamics}, 
our results with this potential are typically closer to the
results from sharp interface methods than with the potential $W^{poly}$, see
\cite[Tab. 1]{GarckeHinzeKahle_CHNS_AGG_linearStableTimeDisc}.

In the following, whenever we use the letter $W$, we mean any of the three
mentioned bulk energy potentials.

In preparation of later results, we state the splittings of the potentials  $W$ into $W(\varphi)= W_+(\varphi) + W_-(\varphi)$.
 These are
 \begin{align*}
 W^{poly}_+(\varphi) &= \frac{1}{4}\varphi^4  - \frac{1}{4},
  & W^{poly}_-(\varphi) &= \frac{1}{2}(1- \varphi^2),\\
W^{poly_2}_+(\varphi) &= 
  \begin{cases}
  \frac{1}{4}\varphi^4 - \frac{1}{4} & \mbox{if }  |\varphi| \leq 1,\\
	(|\varphi|-1)^2-\frac{1}{2}(1-\varphi^2) & \mbox{if } |\varphi| > 1,
  \end{cases}
  &
    W^{poly_2}_-(\varphi) &= \frac{1}{2}(1-\varphi^2),\\
W^{\hp}_+(\varphi) &= \frac{\hp}{2}\lambda(\xi \varphi)^2 + \theta,
  & W^{\hp}_-(\varphi) &= \frac{1}{2}(1-(\xi\varphi)^2).
 \end{align*}
 These splittings are not unique, and we refer for example to 
 \cite{2014-WuZwietenZee-StabilizedSecondOrderConvecSplittingCHmodels} for an alternative splitting of $W^{poly_2}$.
 We further refer to \cite{GonzalesTierra_linearSchemes_CH} for a discussion on the dissipation that is introduced by the convex-concave splitting
 and also for an elaborated discussion on the dissipation that in general is introduced by splitting $W$.
 In our numerical tests, splittings that have a quadratic convex part and thus give linear systems, 
 typically lead to broader interfaces during the simulation and require smaller
 time steps to prevent this effect. Thus it is favorable to use non-linear systems as obtained by
 the proposed splittings above. 
 
To define the scaled surface tension $\sigma$ we introduce the constant $c_W$ as 
$c_W^{-1} = \int_{-\infty}^\infty 2W(\Phi_0(z))\,dz  
= \int_{-\infty}^\infty (\partial_z\Phi_0(z))^2\,dz$, 
where $\Phi_0$ denotes the first order approximation of $\varphi$ depending on $W$.
It satisfies
$\Phi_0(z)_{zz} = W^\prime(\Phi_0(z))$,
 see \cite[Sec. 4.3.3]{AbelsGarckeGruen_CHNSmodell}.
Then $\sigma = c_W \sigma_{12}$, where $\sigma_{12}$ denotes the physical value
of the surface tension between phase $1$ and phase $2$.
As the dynamics of the diffuse model depend on the particular form of $W$, this scaling is necessary to
guarantee that the same sharp interface dynamic is approximated independently of $W$.
Using $W^{poly}$ and
$W^{poly_2}$ it holds $\Phi_0(z) = \tanh(z/\sqrt{2})$ and $c_W = \frac{3}{2\sqrt 2}$.
For $W^\hp$ one obtains by elementary calculation
\begin{align*}
  \Phi_0(z) = 
  \begin{cases}
    -\Phi_0(-z) & \mbox{if } z<0,\\
    \sqrt{\xi}^{-1}\sin(\xi z) & \mbox{if } 0\leq z \leq z_0 := \xi^{-1}\arctan(\sqrt{s-1}),\\
    1-s^{-1}\exp(-\xi\sqrt{s-1}(z-z_0)) & \mbox{if } z > z_0, 
  \end{cases}
\end{align*}
and 
\begin{align*}
c_W^{-1} = 
(1-\hp^{-2})\arctan(\sqrt{\hp-1})
+\hp^{-2}(\hp+2)\sqrt{\hp-1}.
\end{align*}
For $\hp\to\infty$ we
recover the well-known scaling $c_W = \frac{2}{\pi}$ for the double-obstacle potential. 
\end{remark}

\begin{remark}[Contact line energy]
  \label{rm:M:ContactEnergy}
  The basic formula to derive the contact line energy is given by Young's law, namely
  \begin{align*} 
    \sigma_{s1}-\sigma_{s2} = \sigma_{12}\cos\theta_s.
  \end{align*}
  Here $\sigma_{s1}$ and $\sigma_{s2}$ denote the physical surface tensions between phase 1 ($\varphi=-1$) and the solid
  ($\sigma_{s1}$) and phase 2 ($\varphi=1$) and the solid ($\sigma_{s2}$).
  Further $\sigma_{12}$ 
  denotes the surface tension between phase 1 and phase 2 and $\theta_s$ denotes the
  static equilibrium contact angle between the solid and the interface and is
  measured in phase 2.
  
 We use the ansatz 
  \begin{align*}
    \gamma(\varphi) := 
    \frac{\sigma_{s1}+\sigma_{s2}}{2} 
    -\sigma_{12}\cos\theta_s\vartheta(\varphi)
  \end{align*} 
  and choose $\vartheta(\varphi)$ to fulfill 
  \begin{align*}
    \gamma(-1) = \sigma_{s1},
    \quad \gamma(0) = \frac{\sigma_{s1}+\sigma_{s2}}{2},
    \quad \gamma(1) = \sigma_{s2},
     \quad \gamma^\prime(\pm1) = 0.
  \end{align*}
  In particular it holds, that $\vartheta(-1) = -\frac{1}{2}$ and $\vartheta(1)=\frac{1}{2}$. 
  Here, the unscaled value of the surface tension appears as
  can be shown by matched asymptotic expansions, see 
  \cite[Sec. 4.3.4]{AbelsGarckeGruen_CHNSmodell}.
   
   Common choices for $\vartheta$ contain the sine function $\vartheta^{\sin}(\varphi):=
  \frac{1}{2}\sin(\frac{\pi}{2}\varphi)$, for example proposed in 
  \cite[Sec.  4]{2006_QianWangShen_Variational_MovingContactLine__BoundaryConditons},
 or a cubic polynomial $\vartheta^{poly}(\varphi) = \frac{1}{4}(3\varphi-\varphi^3)$, 
 for example proposed in \cite{2018-XuDiHu-SharpInterfaceLimit-NavierSlipBoundary}.
 An alternative is given in
 \cite{2007_DingSpelt_WettingCondition_inDiffuseInterface_ContactLineMotion}.
  Here the assumption of equipartition of energy, i.e., $\frac{\epsilon}{2}|\nabla \varphi|^2 \approx \frac{1}{\epsilon}W(\varphi)$, is used
  to derive $(\vartheta^{W})^\prime(\varphi) = c_W\sqrt{2W(\varphi)}$.
  Finally, we state a contact line energy, that is the sum of a convex and a concave function
  namely  
  $\vartheta^{cc}(\varphi) = \vartheta^{cc}_+(\varphi) + \vartheta^{cc}_-(\varphi)$ with
\begin{align*}
  \vartheta^{cc}_+(\varphi) &= 
  \begin{cases}
  -\frac{1}{2} & \mbox{if } \varphi\leq-1,\\
  \frac{1}{2}(\varphi+1)^2 -\frac{1}{2} & \mbox{if } \varphi \in (-1,0),\\
   \varphi& \mbox{if } \varphi \geq 0,
  \end{cases} 
&
\vartheta^{cc}_-(\varphi) &= 
  \begin{cases}
    0 & \mbox{if } \varphi \leq 0,\\
    -\frac{1}{2}\varphi^2 & \mbox{if } \varphi \in (0,1),\\
   \frac{1}{2}-\varphi & \mbox{if } \varphi \geq 1.
   \end{cases} 
  \end{align*}
  Here $\vartheta^{cc}_+$ is convex and $\vartheta^{cc}_-$ is concave and
  $\vartheta^{cc} \in C^{1,1}(\mathbb R)$ with $\vartheta^{\prime\prime} \in
  L^{\infty}(\mathbb R)$.

  Note, that for any $\vartheta$ that has a bounded second derivative, we can
  define a convex-concave splitting via 
  \begin{align*}
    \vartheta_+(\varphi) &= \vartheta(\varphi) + \frac{1}{2}\max_{\phi\in\mathbb R}(\vartheta^{\prime\prime}(\phi))\varphi^2,
&
\vartheta_-(\varphi) &= - \frac{1}{2}\max_{\phi\in\mathbb R}(\vartheta^{\prime\prime}(\phi))\varphi^2,
  \end{align*}
  compare 
  \cite{2018_BackofenVoigt_ConvexitySplittingPhaseField}.
  This is very similar to the stabilization approach, proposed for example in \cite{2015-ShenYangYu-EnergyStableSchemesForCHMCL-Stabilization}, 
  that essentially resembles one of Eyre's linear schemes \cite{GonzalesTierra_linearSchemes_CH}.
In the following we always assume a convex-concave splitting of $\gamma$.
  This approach can also be used for the potential $W$.

 \end{remark}

 \begin{remark}
 	To the best of our knowledge, there is no consent yet which combinations of bulk energy potential and contact line energy are most appropriate from both a physical and numerical point of view. 
 	From an analytical point of view, all combinations are reasonable
 	that lead to the correct sharp interface limit, 
 	see \cite{2018-XuDiHu-SharpInterfaceLimit-NavierSlipBoundary} for results on formal
 	sharp interface asymptotics. 
	Here, the authors use the  
 	combination of $W^{poly}$ and $\vartheta^{poly}$. 
 	However, this topic is subject to future work.
 	Further note, that using the notation from 
 	\cite{2018-XuDiHu-SharpInterfaceLimit-NavierSlipBoundary} 
 	we are in the setting $L_d = \mathcal O(\epsilon)$, and $V_s = \mathcal O(1)$.
 \end{remark}

\section{The weak formulation}
\label{sec:F}
We next derive the weak formulation that is the basis for our numerical
scheme proposed in \cref{sec:S}.
We assume sufficient regularity of all appearing functions.
Multiplying \cref{eq:M:3_CH1} with $\frac{\partial\rho}{\partial\varphi}$ we
observe
\begin{align}
\partial_t \rho + \mbox{div}(\rho v+J) = R.
\label{eq:M:MassConservationLaw}
\end{align}
Note that if $\rho$ is a  nonlinear function $R \neq 0$ holds and thus mass
conservation can be violated as soon as a nonlinear function for $\rho$ is used
to guarantee $\rho>0$.
Note that the conservation of $\varphi$ is not affected.
 Using \cref{eq:M:MassConservationLaw} the momentum equation \cref{eq:M:1_NS1}
 can equivalently be written as
\begin{align}
\partial_t(\rho v) 
+ \mbox{div}\left(v \otimes (\rho v + J)\right)
- R\frac{v}{2}
-\mbox{div}\left(2\eta Dv\right)
+\nabla p &=- \varphi\nabla\mu  + \rho g,
\label{eq:M:1.2_NS1.2}
\end{align}
see
\cite[Eq. (1.12)]{AbelsBreit_weakSolution_nonNewtonian_DifferentDensities}. We stress that this
reformulation is independent of the actual boundary condition.

To define the weak formulation we multiply both \cref{eq:M:1_NS1} and \cref{eq:M:1.2_NS1.2} by a
solenoidal test function $\frac{1}{2}w$ that satisfies $w|_{\partial\Omega}\cdot \nu_\Omega = 0$
and sum up the equations to achieve
\begin{align*}
  \frac{1}{2}\int_\Omega(\rho \partial_t v + \partial_t(\rho v)) \cdot w\dx
  - \int_\Omega \mbox{div}\left(2\eta Dv\right) \cdot w\dx +\int_\Omega (\varphi\nabla \mu-\rho g)\cdot w\dx&
  \nonumber\\
  +\frac{1}{2} \int_\Omega ((\rho v + J) \cdot \nabla) v \cdot w\dx
  +\frac{1}{2} \int_\Omega \mbox{div}\left(v \otimes (\rho v + J)\right)  \cdot w\dx 
  = 0. &
\end{align*}
Using integration by parts  together with the boundary
conditions $v\cdot \nu_\Omega = 0$ and $\nabla \mu \cdot \nu_\Omega = 0$ we observe
\begin{align*}
    &\frac{1}{2} \int_\Omega ((\rho v + J)\cdot  \nabla) v \cdot w\dx
  +\frac{1}{2} \int_\Omega \mbox{div}\left(v \otimes (\rho v + J)\right)\cdot w\dx\\
  =& \frac{1}{2}\int_\Omega ((\rho v + J)\cdot \nabla) v \cdot  w - ((\rho v + J)\nabla) w \cdot  v \dx\\
  =:& a(\rho v + J,v,w).
\end{align*}
Note that $a(\cdot,v,v) = 0$ holds.
Using integration by parts for the viscous stress we observe
\begin{align*}
   - \int_\Omega \mbox{div}\left(2\eta Dv\right)\cdot w\dx =&
    \int_\Omega 2\eta Dv : Dw\dx 
   -\int_{\partial\Omega} 2\eta Dv \nu_\Omega\cdot  w \ds,\\
   =&\int_\Omega 2\eta Dv : Dw\dx 
   +\int_{\partial\Omega} (l(\varphi)v_{tan} +  rB\nabla
   \varphi) \cdot w \ds   
\end{align*}
where $Dv:Dw := \sum_{ij=1}^n (Dv)_{ij}(Dw)_{ij}$
and we use the boundary conditions \cref{eq:M:6_NS_BC} and
\cref{eq:M:7_CH_BC}.

The weak form of \cref{eq:M:3_CH1}--\cref{eq:M:4_CH2} is derived by the standard procedure.
Summarizing the equations, we obtain the following weak form of \cref{eq:M:1_NS1}--\cref{eq:M:8_mu_neumann}:
\begin{definition}[The weak formulation]
Find sufficiently smooth $v,\mu,\varphi$, with $v$ solenoidal, $v\cdot \nu_\Omega = 0$, such that for all
$w$, $\psi$, $\phi$, with $w$ solenoidal, the following equations are satisfied:
\begin{align}
   \frac{1}{2}\int_\Omega(\rho \partial_t v + \partial_t(\rho v))\cdot  w\dx
  +a(\rho v + J,v,w)
  +\int_\Omega 2\eta Dv : Dw\dx&\nonumber\\
  +\int_{\partial\Omega} (l(\varphi)v_{tan} +  r B(\varphi_t,\varphi,v)\nabla \varphi) \cdot w \ds
    -\int_\Omega(-\varphi\nabla \mu +\rho g )\cdot w\dx &= 0,
  \label{eq:M:weak_1}
  \\
\int_\Omega \varphi_t \psi\dx 
  - \int_\Omega \varphi v\cdot \nabla \psi\dx 
  +\int_\Omega b\nabla \mu\cdot\nabla \psi\dx &= 0,
  \label{eq:M:weak_2}\\
\int_\Omega
   \sigma\epsilon\nabla \varphi\cdot\nabla \phi
   + \frac{\sigma}{\epsilon}W^\prime(\varphi)\phi \dx
   - \int_\Omega\mu \phi \dx 
  &\nonumber\\
  +\int_{\partial\Omega} \left( r B(\varphi_t,\varphi,v) + \gamma^\prime(\varphi)\right)
  \phi \ds   &= 0.
    \label{eq:M:weak_3}
\end{align} 
\end{definition}
The weak form \cref{eq:M:weak_1}--\cref{eq:M:weak_3} allows us to derive the following energy
identity.

\begin{theorem}[The formal energy identity]
Assume there exists a sufficiently smooth solution to
\cref{eq:M:weak_1}--\cref{eq:M:weak_3}. Then the following energy identity
holds
\begin{equation}
  \begin{aligned}
       \frac{d}{dt}\left(\int_\Omega \frac{1}{2}\rho |v|^2\dx
  + \sigma\int_\Omega\frac{\epsilon}{2}|\nabla \varphi|^2 +
  \frac{1}{\epsilon}W(\varphi)\dx
   + \int_{\partial\Omega}\gamma\ds
  \right)\\
  + \int_\Omega 2\eta |Dv|^2\dx
  	+ \int_\Omega b|\nabla \mu|^2\dx
  	\\
+\int_{\partial\Omega}l(\varphi)|v_{tan}|^2\ds
  + r \int_{\partial\Omega} |B(\varphi_t,\varphi,v)|^2\ds = \int_\Omega \rho g \cdot v\dx.
      \label{eq:M:EnergyIdent} 
  \end{aligned}
\end{equation}
Note that the energy in the system can only increase by the gravitational acceleration.
\end{theorem}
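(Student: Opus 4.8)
The plan is to test the three weak equations \cref{eq:M:weak_1}--\cref{eq:M:weak_3} against the canonical choices that turn each of them into an energy balance, and then to add the results so that the coupling terms cancel pairwise. Concretely I would take $w=v$ in \cref{eq:M:weak_1} (this is admissible, since $v$ is solenoidal with $v\cdot\nu_\Omega=0$, which is exactly the requirement imposed on test functions), $\psi=\mu$ in \cref{eq:M:weak_2}, and $\phi=\varphi_t$ in \cref{eq:M:weak_3}.

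In \cref{eq:M:weak_1} with $w=v$ the convective term drops because $a(\rho v+J,v,v)=0$; the symmetrized inertia term becomes a clean time derivative, $\frac12\int_\Omega(\rho\partial_t v+\partial_t(\rho v))\cdot v\dx=\int_\Omega\rho v\cdot\partial_t v\dx+\tfrac12\int_\Omega(\partial_t\rho)|v|^2\dx=\tfrac12\tfrac{d}{dt}\int_\Omega\rho|v|^2\dx$, \emph{without} invoking mass conservation (which genuinely fails here for nonlinear $\rho$); the viscous term gives $\int_\Omega2\eta|Dv|^2\dx$; and since $v=v_{tan}$ on $\partial\Omega$, the Navier boundary term yields $\int_{\partial\Omega}l(\varphi)|v_{tan}|^2\ds$. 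What is left over is $\int_\Omega\varphi\,(v\cdot\nabla\mu)\dx$, the boundary contribution $\int_{\partial\Omega}rB\,(v\cdot\nabla\varphi)\ds$, and $-\int_\Omega\rho g\cdot v\dx$. Testing \cref{eq:M:weak_2} with $\psi=\mu$ gives $\int_\Omega\varphi_t\mu\dx-\int_\Omega\varphi\,(v\cdot\nabla\mu)\dx+\int_\Omega b|\nabla\mu|^2\dx=0$, and testing \cref{eq:M:weak_3} with $\phi=\varphi_t$, using $\nabla\varphi\cdot\nabla\varphi_t=\tfrac12\partial_t|\nabla\varphi|^2$, $W'(\varphi)\varphi_t=\partial_t W(\varphi)$ and $\gamma'(\varphi)\varphi_t=\partial_t\gamma(\varphi)$ on $\partial\Omega$, gives $\tfrac{d}{dt}\bigl(\sigma\int_\Omega\tfrac{\epsilon}{2}|\nabla\varphi|^2+\tfrac1\epsilon W(\varphi)\dx+\int_{\partial\Omega}\gamma\ds\bigr)-\int_\Omega\mu\varphi_t\dx+\int_{\partial\Omega}rB\varphi_t\ds=0$.

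Summing the three identities, the terms $\pm\int_\Omega\varphi\,(v\cdot\nabla\mu)\dx$ cancel, the terms $\pm\int_\Omega\mu\varphi_t\dx$ cancel, and the two surface contributions combine, through the definition $B=\varphi_t+v\cdot\nabla\varphi$, into $\int_{\partial\Omega}rB\,(v\cdot\nabla\varphi)\ds+\int_{\partial\Omega}rB\varphi_t\ds=\int_{\partial\Omega}r|B|^2\ds$; collecting the surviving terms then yields \cref{eq:M:EnergyIdent} verbatim. The argument is essentially bookkeeping once the weak form is available; the only steps that deserve genuine care are the rewriting of the symmetrized inertia term as $\tfrac12\tfrac{d}{dt}(\rho|v|^2)$ and the check that the Navier boundary term in \cref{eq:M:weak_1} pairs exactly with the $rB\varphi_t$ term from \cref{eq:M:weak_3} to assemble the squared contact-line dissipation $r|B|^2$ — this pairing is precisely the structural reason the generalized Navier condition and the dynamically advected contact-angle condition are compatible. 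All interchanges of $\partial_t$ with spatial integrals and all integrations by parts are legitimate under the assumed smoothness, and, reading off \cref{eq:M:EnergyIdent}, the gravity term $\int_\Omega\rho g\cdot v\dx$ is the only one without a definite sign, which is the content of the closing remark.
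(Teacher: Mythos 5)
Your proposal is correct and follows exactly the paper's (very terse) proof: test \cref{eq:M:weak_1}--\cref{eq:M:weak_3} with $w=v$, $\psi=\mu$, $\phi=\partial_t\varphi$ and sum; the paper leaves all the bookkeeping implicit, and your filled-in details — the symmetrized inertia term collapsing to $\tfrac12\tfrac{d}{dt}\int_\Omega\rho|v|^2\dx$ without needing mass conservation, the pairwise cancellation of the coupling terms, and the assembly of $r|B|^2$ from the two boundary contributions — are all accurate.
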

\begin{proof}
Use $w \equiv v$, $\Psi \equiv  \mu$, and $\Phi \equiv  \partial_t \varphi$ as test
functions in \cref{eq:M:weak_1}--\cref{eq:M:weak_3} and sum up the resulting equations. 
\end{proof}

\section{The numerical schemes}
\label{sec:S}For a practical implementation in a finite element scheme we introduce a time
grid $0 = t_0 < t_1 < \ldots<t_{m-1} < t_m< \ldots <t_M = T$ on $I = [0,T]$.
For the sake of notational simplicity let the time grid be equidistant with step size
$\tau>0$.
We further introduce 
a triangulation $\mathcal T_h$ of $\overline \Omega$
into cells $T_i$, such that $\mathcal T_h = \bigcup_{i=1}^{N}T_i$ covers
$\overline \Omega$ exactly. 

On $\mathcal T_h$ we introduce the finite element spaces
\begin{align*}
V_1 &:= \{v \in C(\overline\Omega) \,|\, v|_{T_i} \in \mathcal P_1\},\\
V_2 &:= \{v \in C(\overline\Omega)^d \,|\, v|_{T_i} \in (\mathcal P_2)^2,\,
v \cdot \nu_\Omega = 0\},
\end{align*}
where $\mathcal P_k$ denotes the space of polynomials of order up to $k$.
We use $V_1$ to define discrete approximations $\varphi_h$, $\mu_h$, and $p_h$ of
the corresponding continuous variables, and $V_2$ to define the discrete
approximation $v_h$ of $v$. 
This means that we use standard Taylor--Hood elements for the
Navier--Stokes part and explicitly denote the pressure variable in the following.

The scheme reads as follows:\\
Given 
$\varphi^{m-1} \in V_1$, 
$\mu^{m-1} \in V_1$, and
$v^{m-1} \in V_2$, 
find 
$\varphi^m_h \in V_1$, 
$\mu^m_h \in V_1$,
$p^m_h \in V_1$ and
$v^m_h \in V_2$,
such that for all 
$w \in V_2$,
$q \in V_1$,
$\Phi \in V_1$, and
$\Psi \in V_1$
the following equations hold
\begin{align}
  \frac{1}{\tau}\left( \frac{\rho^m+\rho^{m-1}}{2} v^m_h -\rho^{m-1}v^{m-1},w\right)\nonumber\\
  + a(\rho^{m-1}v^{m-1} + J^{m-1},v^m_h,w)
  + (2\eta^{m-1}Dv^m_h,Dw) - (\mbox{div} w,p^m_h)\nonumber\\
   + (l(\varphi^{m-1})v^m_{h,tan} +  r B^m_h\nabla \varphi^{m-1},w)_{\partial\Omega}\nonumber\\
		+( \varphi^{m-1}\nabla \mu^m_h,w) -(g\rho^{m-1},w) &= 0,
    \label{eq:S:1_NS_1}\\
-(\mbox{div} v^m_h,q) &= 0, \label{eq:S:2_NS_2}\\
\frac{1}{\tau}(\varphi_h^{m} - \varphi^{m-1},\Psi)
   -(\varphi^{m-1}v^m_h,\nabla \Psi)
		+(b\nabla \mu^m_h,\nabla \Psi) &= 0 \label{eq:S:3_CH1},\\
\sigma \epsilon(\nabla \varphi^m_h,\nabla \Phi)
   +\frac{\sigma}{\epsilon}(W_+^\prime(\varphi^m_h) + W_-^\prime(\varphi^{m-1}),\Phi)  
   - (\mu^m_h,\Phi) \nonumber\\
   +\left( r B^m_h,
    \Phi\right)_{\partial\Omega} 
    +\left( \gamma^\prime_+(\varphi^{m}_h) + \gamma^\prime_-(\varphi^{m-1}),
    \Phi\right)_{\partial\Omega} 
    &= 0,
    \label{eq:S:4_CH2}
\end{align}
with 
$J^{m-1} := -b\frac{\partial \rho}{\partial\varphi}(\varphi^{m-1})\nabla \mu^{m-1}$,
$B^m_h:= \left( 
\frac{\varphi^m_h-\varphi^{m-1}}{\tau} + v^m_h\cdot \nabla \varphi^{m-1}
\right)$,
$\rho^{m-1} := \rho(\varphi^{m-1})$, and $\eta^{m-1} := \eta(\varphi^{m-1})$.

\bigskip

Using Brouwer's fixed-point theorem 
one can show the existence of at least one solution 
following \cite[Thm. 2]{GarckeHinzeKahle_CHNS_AGG_linearStableTimeDisc}. 
The uniqueness stays unclear due to the nonlinearity
$\rho^mv^m_h$ in \cref{eq:S:1_NS_1}.
The scheme fulfills a fully discrete variant of the formal energy identity \cref{eq:M:EnergyIdent}.

\begin{theorem}[The fully discrete energy inequality]
\label{thm:M:enerInequDisc}
Let $\varphi^m_h \in V^m_1$,
$\mu_h^m \in V^m_1$, and
$v_h^m \in V_2^m$ denote a solution to 
\cref{eq:S:1_NS_1}--\cref{eq:S:4_CH2}.
Then the following energy inequality holds
\begin{equation*}
  \begin{aligned}
    \frac{1}{\tau}\left(\frac{1}{2}\int_\Omega \rho^m|v^m_h|^2 
    +\sigma\int_\Omega\frac{\epsilon}{2}|\nabla \varphi^m_h|^2
    + \frac{1}{\epsilon}W(\varphi^m_h)\dx
    + \int_{\partial\Omega} \gamma(\varphi^m_h)\ds
    \right)\\
+\int_\Omega 2\eta^{m-1}|Dv^m_h|^2 \dx 
    + b \int_\Omega |\nabla \mu^m_h|^2\dx
     +\int_{\partial\Omega}l(\varphi^{m-1})|v^m_{h,tan}|^2\ds
    + r \int_{\partial\Omega} |B^m_h|^2\ds    \\
+ \frac{1}{\tau}\left( 
     \frac{1}{2}\int_\Omega \rho^{m-1}|v^m_h-v^{m-1}|^2\dx
     + \frac{\sigma\epsilon}{2}\int_\Omega |\nabla \varphi^m_h-\nabla \varphi^{m-1}|^2\dx
     \right)\\
\leq \frac{1}{\tau}\left(\frac{1}{2}\int_\Omega \rho^{m-1}|v^{m-1}|^2 
    +\sigma\int_\Omega\frac{\epsilon}{2}|\nabla \varphi^{m-1}|^2
    + \frac{1}{\epsilon}W(\varphi^{m-1})\dx
    + \int_{\partial\Omega} \gamma(\varphi^{m-1})\ds \right)\\
    +\int_\Omega \rho^{m-1} g\cdot v^m_h\dx. 
  \end{aligned}
\end{equation*}
\end{theorem}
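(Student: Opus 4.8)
The plan is to reproduce, at the fully discrete level, the proof of the formal energy identity: test \cref{eq:S:1_NS_1} with $w\equiv v^m_h$, \cref{eq:S:2_NS_2} with $q\equiv p^m_h$, \cref{eq:S:3_CH1} with $\Psi\equiv\mu^m_h$, and \cref{eq:S:4_CH2} with $\Phi\equiv(\varphi^m_h-\varphi^{m-1})/\tau$ (all admissible, since $V_1$, $V_2$ are linear spaces), and then add the four resulting scalar equations, which by construction sum to zero. Compared with the continuous case the only new ingredients are: first, the backward differences are treated by the binomial identity $2\,a\cdot(a-b)=|a|^2-|b|^2+|a-b|^2$, which produces the energy differences together with the nonnegative numerical-dissipation terms $\tfrac{1}{2\tau}\rho^{m-1}|v^m_h-v^{m-1}|^2$ and $\tfrac{\sigma\epsilon}{2\tau}|\nabla\varphi^m_h-\nabla\varphi^{m-1}|^2$; second, the bulk potential and the boundary energy are only estimated, not reproduced exactly, via the convex--concave splittings $W=W_++W_-$ and $\gamma=\gamma_++\gamma_-$, and this is precisely what turns the identity into an inequality.

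For the momentum equation tested with $v^m_h$, expanding $\rho^{m-1}v^{m-1}\cdot v^m_h=\tfrac12\rho^{m-1}(|v^{m-1}|^2+|v^m_h|^2-|v^m_h-v^{m-1}|^2)$ rewrites the inertia term as $\tfrac{1}{2\tau}\int_\Omega(\rho^m|v^m_h|^2-\rho^{m-1}|v^{m-1}|^2+\rho^{m-1}|v^m_h-v^{m-1}|^2)\dx$; here the average $\tfrac12(\rho^m+\rho^{m-1})$ in \cref{eq:S:1_NS_1} is exactly what lets the new-time weight $\rho^m$ survive on the energy term while the dissipation term keeps $\rho^{m-1}$. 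The convective trilinear term vanishes, $a(\rho^{m-1}v^{m-1}+J^{m-1},v^m_h,v^m_h)=0$; the viscous term gives $\int_\Omega 2\eta^{m-1}|Dv^m_h|^2\dx$; the pressure term $-(\mbox{div}\,v^m_h,p^m_h)$ is killed by \cref{eq:S:2_NS_2} with $q\equiv p^m_h$; the boundary Navier term gives $\int_{\partial\Omega}l(\varphi^{m-1})|v^m_{h,tan}|^2\ds$ (using $v^m_h\cdot\nu_\Omega=0$) plus a Young contribution $r(B^m_h\,\nabla\varphi^{m-1},v^m_h)_{\partial\Omega}$ which I carry along; and $(\varphi^{m-1}\nabla\mu^m_h,v^m_h)$ is retained while $-(g\rho^{m-1},v^m_h)$ moves to the right.

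Testing \cref{eq:S:3_CH1} with $\mu^m_h$ yields $b\int_\Omega|\nabla\mu^m_h|^2\dx$, the term $\tfrac1\tau(\varphi^m_h-\varphi^{m-1},\mu^m_h)$, and $-(\varphi^{m-1}v^m_h,\nabla\mu^m_h)$, the last cancelling exactly the retained $(\varphi^{m-1}\nabla\mu^m_h,v^m_h)$. Testing \cref{eq:S:4_CH2} with $(\varphi^m_h-\varphi^{m-1})/\tau$, the gradient term becomes $\tfrac{\sigma\epsilon}{2\tau}\int_\Omega(|\nabla\varphi^m_h|^2-|\nabla\varphi^{m-1}|^2+|\nabla\varphi^m_h-\nabla\varphi^{m-1}|^2)\dx$; the term $-(\mu^m_h,(\varphi^m_h-\varphi^{m-1})/\tau)$ cancels $\tfrac1\tau(\varphi^m_h-\varphi^{m-1},\mu^m_h)$; and $r(B^m_h,(\varphi^m_h-\varphi^{m-1})/\tau)_{\partial\Omega}$, rewritten via $(\varphi^m_h-\varphi^{m-1})/\tau=B^m_h-v^m_h\cdot\nabla\varphi^{m-1}$, equals $r\int_{\partial\Omega}|B^m_h|^2\ds$ minus a term that cancels the Young contribution carried from the momentum equation. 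Finally, convexity of $W_+$ and concavity of $W_-$ give pointwise $W(\varphi^m_h)-W(\varphi^{m-1})\le(W_+^\prime(\varphi^m_h)+W_-^\prime(\varphi^{m-1}))(\varphi^m_h-\varphi^{m-1})$, hence $\tfrac{\sigma}{\epsilon\tau}\int_\Omega(W(\varphi^m_h)-W(\varphi^{m-1}))\dx\le\tfrac\sigma\epsilon(W_+^\prime(\varphi^m_h)+W_-^\prime(\varphi^{m-1}),(\varphi^m_h-\varphi^{m-1})/\tau)$, and the analogous estimate holds for $\gamma$ on $\partial\Omega$ with the splitting fixed in \Cref{rm:M:ContactEnergy}.

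Substituting these two lower bounds into the summed identity $(=0)$ — which only weakens it to ``$\le 0$'' — and moving the old-time energies $\tfrac1\tau\bigl(\tfrac12\int\rho^{m-1}|v^{m-1}|^2+\sigma\int(\tfrac\epsilon2|\nabla\varphi^{m-1}|^2+\tfrac1\epsilon W(\varphi^{m-1}))+\int_{\partial\Omega}\gamma(\varphi^{m-1})\bigr)$ together with the gravity term $\int_\Omega\rho^{m-1}g\cdot v^m_h$ to the right yields exactly the asserted inequality. I expect the only real obstacle to be bookkeeping: keeping track of the four convective/coupling cancellations — the bulk pair $\pm(\varphi^{m-1}\nabla\mu^m_h,v^m_h)$ and the boundary pair $\pm r(B^m_h,v^m_h\cdot\nabla\varphi^{m-1})_{\partial\Omega}$ — and getting the direction of both convex--concave inequalities right, since a swapped argument ($\varphi^m_h\leftrightarrow\varphi^{m-1}$) or a misplaced sign would reverse the final estimate. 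If the finite element spaces are genuinely time-dependent, one additionally needs $\varphi^{m-1}$ to be representable at time $t_m$ (e.g., nested meshes), cf.\ the discussion of spatial adaptivity in \cite{GarckeHinzeKahle_CHNS_AGG_linearStableTimeDisc}.
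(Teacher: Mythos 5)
Your proposal is correct and follows essentially the same route as the paper: the same four test functions ($w\equiv v^m_h$, $q\equiv p^m_h$, $\Psi\equiv\mu^m_h$, $\Phi\equiv(\varphi^m_h-\varphi^{m-1})/\tau$), the binomial identity for the time differences, and the convex--concave estimates for $W$ and $\gamma$ that turn the summed identity into the stated inequality. You merely spell out the cancellations (the $\pm(\varphi^{m-1}\nabla\mu^m_h,v^m_h)$ pair and the $\pm r(B^m_h,v^m_h\cdot\nabla\varphi^{m-1})_{\partial\Omega}$ pair) that the paper leaves implicit, and all of these check out.
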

\begin{proof}
We use $w \equiv v^m_h$,
$q = p^m_h$, 
$\Psi \equiv  \mu^m_h$ 
and $\Phi \equiv \frac{\varphi^m_h-\varphi^{m-1}}{\tau}$ 
as test functions in \cref{eq:S:1_NS_1}--\cref{eq:S:4_CH2}
and sum up to obtain
\begin{align*}
  \frac{1}{\tau}\left(\frac{1}{2}\int_\Omega \rho^m|v^m_h|^2 -
  \frac{1}{2}\int_\Omega \rho^{m-1}|v^{m-1}|^2+
  \frac{1}{2}\int_\Omega \rho^{m-1}|v^{m}_h - v^{m-1}|^2 \dx \right)\\
  +\int_\Omega 2\eta^{m-1}|Dv^m_h|^2 \dx 
  -\int_\Omega \rho^{m-1} g\cdot v^m_h\dx\\
  +\int_{\partial\Omega}l(\varphi^{m-1})v^m_{h,tan}\cdot v^m_h\ds
  +r\int_{\partial\Omega}B^m_h\nabla \varphi^{m-1}\cdot v^m_h\ds\\
+ b \int_\Omega |\nabla \mu^m_h|^2\dx\\
  +\frac{\sigma\epsilon}{2\tau}\left( 
  \int_\Omega |\nabla \varphi^m_h|^2
   - |\nabla \varphi^{m-1}|^2
  +|\nabla \varphi^m_h-\nabla \varphi^{m-1}|^2\dx\right)\\
  +\frac{\sigma}{\epsilon}\int_\Omega (W^\prime_+(\varphi^{m}_h)+W^\prime_-(\varphi^{m-1}))
  \frac{\varphi^m_h-\varphi^{m-1}}{\tau}\dx\\
  +r\int_{\partial\Omega} B^m_h\frac{\varphi^m_h-\varphi^{m-1}}{\tau}\ds
  + \int_{\partial\Omega} 
  (\gamma_+^\prime(\varphi_h^{m}) +\gamma_-^\prime(\varphi^{m-1})) 
  \frac{\varphi^m_h-\varphi^{m-1}}{\tau}\ds
  = 0.
\end{align*}
Using convexity and concavity of $W_+$ and $W_-$, and $\gamma_+$ and $\gamma_-$
it holds
\begin{align*}
  \int_\Omega(W^\prime_+(\varphi^{m}_h)+W^\prime_-(\varphi^{m-1}))
  \frac{\varphi^m_h-\varphi^{m-1}}{\tau} \dx
  \geq \frac{1}{\tau}\int_\Omega W(\varphi^m_h)-W(\varphi^{m-1})\dx,\\
\int_\Omega(\gamma^\prime_+(\varphi^{m}_h)+\gamma^\prime_-(\varphi^{m-1}))
  \frac{\varphi^m_h-\varphi^{m-1}}{\tau} \ds
  \geq \frac{1}{\tau}\int_\Omega \gamma(\varphi^m_h)-\gamma(\varphi^{m-1})\ds.
\end{align*}
Summing up and using $v\cdot \nu_\Omega = 0$, we obtain the desired result.
\end{proof}

\begin{remark}[Adaptive meshing]
In general, in diffuse interface simulations it is advantageous to use adaptive
meshes to resolve the interfacial region. Then in every time step additional
prolongation operators between subsequent meshes are required. As a consequence, in this case
the energy inequality from \Cref{thm:M:enerInequDisc} only holds with the
prolongated data  for the energy from the old time instance.
We further note that special care has to be taken for prolongating the velocity
field, as the prolongated velocity field typically is not solenoidal with respect
to the new mesh.
We refer to
\cite{GarckeHinzeKahle_CHNS_AGG_linearStableTimeDisc,
2011-BesierWollner-PressureApproximationIncompressibleFlowDynamicallySpatialMesh} for further discussion of this topic.

\end{remark}

\subsection{Variants}
\label{ssec:S:V}
Let us state variants of the above discretization scheme \cref{eq:S:1_NS_1}--\cref{eq:S:4_CH2}
for numerical comparison.
We note, that \cref{eq:S:1_NS_1}--\cref{eq:S:4_CH2} is a fully coupled and non-linear scheme.

\subsubsection{A stable decoupled scheme} 
\label{ssec:S:V:decoupled}

If $r\equiv 0$ the scheme is only coupled by the transport term  
$(\varphi^{m-1} v^m_h,\nabla \Psi) $ in \cref{eq:S:3_CH1}. The same holds
for $l \to \infty$, which results in the commonly used no-slip condition for the
Navier--Stokes equation. In the case of no-slip conditions $B$ is independent of
$v$ and thus again the only coupling is the transport term in \cref{eq:S:3_CH1}.

In both cases we can decouple the Navier--Stokes equation and the
Cahn--Hilliard equation by using an augmented velocity field in
\cref{eq:S:3_CH1}, see for example 
\cite{Minjeaud_decoupling_CHNS,Tierra_Splitting_CHNS,GruenMetzger__CHNS_decoupled,YuYang_MovingContactLine_diffDensities}.
Here we substitute $-\int_\Omega \varphi^{m-1}v^m_h \cdot\nabla \Psi\dx$ in \cref{eq:S:3_CH1} by
\begin{align}
  -\int_\Omega \varphi^{m-1}v^{m-1}\cdot \nabla \Psi\dx 
+ \tau\int_\Omega (\rho^{m-1})^{-1}|\varphi^{m-1}|^2 
\nabla \mu^m_h\cdot\nabla \Psi\dx.
\label{eq:S:V:transport}
\end{align}
The resulting scheme is decoupled; we can first solve
\cref{eq:S:3_CH1,eq:S:4_CH2} and thereafter \cref{eq:S:1_NS_1,eq:S:2_NS_2}. This scheme is also energy
stable, as the additional integral compensates terms arising from H\"older's and  Young's inequality
to balance the first integral with the numerical dissipation $\frac{1}{2}\int_\Omega
\rho^{m-1}|v^m_h-v^{m-1}|^2\dx$. 
This scheme with no-slip conditions for Navier--Stokes and $r\equiv 0$
is analyzed in \cite{GruenMetzger__CHNS_decoupled} for different treatments of
$W^\prime$.
We also refer to \cite{KayStylesWelford} for an alternative decoupling in the case of constant density.
Here the systems are decoupled by
using $v^{m-1}$ in \cref{eq:S:3_CH1}, and the energy stability is obtained by
introducing a step size restriction for the temporal discretization.

If $r >0$, we use $v^{m-1}$ in the definition of $B^m_h$ in \cref{eq:S:4_CH2} and $v^m_h$ in the
corresponding term in \cref{eq:S:1_NS_1} and  can still derive an energy inequality containing an
error of order $r \int_{\partial\Omega} (v^m_h-v^{m-1}) \cdot \nabla \varphi^{m-1}\ds$.
In \cite{GruenMetzger__CHNS_decoupled} a no-slip condition is assumed for $v$ to decouple the
boundary conditions. Then the decoupling proposed in \cref{eq:S:V:transport} is sufficient to
decouple the Navier--Stokes and the Cahn--Hilliard equation.

We note that this scheme can be applied for any bulk energy potential that admits a convex-concave
splitting.

\subsubsection{A stable  decoupled and linear  scheme}
\label{ssec:S:V:stablelinear}
Using the decoupling  proposed in \cref{ssec:S:V:decoupled}, the only nonlinearity in the scheme
arises from $W_+^\prime$. In
\cite{2015-ShenYangYu-EnergyStableSchemesForCHMCL-Stabilization,AlandChen__MovingContactLine,YuYang_MovingContactLine_diffDensities}, a stabilized linear
scheme is used and the term $W_+^\prime(\varphi^m_h) + W_-^\prime(\varphi^{m-1})$ is substituted by 
$W^\prime(\varphi^{m-1}) + S_W(\varphi^{m}_h - \varphi^{m-1})$, where $S_W$ is a suitable
stabilization parameter.
For smooth $W$ it satisfies $S_W \geq \frac{1}{2}\max_{t} |W^{\prime\prime}(t)|$. It can be derived by Taylor expansion of
$W$ at $\varphi^{m-1}$, see for example \cite{2015-ShenYangYu-EnergyStableSchemesForCHMCL-Stabilization}.
As $W^\hp$ is of class $C^{1,1}$ only, $({W^\hp})^{\prime\prime}$ jumps at $\xi^{-1}$ from $-\xi^2$ to $(\hp-1)\xi^2$.
In this case we use $S_W \geq \hp/2$.
For large values of $\hp$  we expect that this stabilization will  prevent changes in $\varphi$ and thus
might have a deep impact on the allover dynamics. 
This is investigated in \Cref{sec:N} and especially discussed in Remark~\ref{rm:relaxation}.
To linearize $\gamma$ we substitute 
$\gamma^\prime_+(\varphi^m_h) + \gamma^\prime_-(\varphi^{m-1})$ by
$\gamma^\prime(\varphi^{m-1}) + S_\gamma(\varphi^m_h - \varphi^{m-1})$ with
$S_\gamma \geq \frac{1}{2}\max_{t}|\gamma^{\prime\prime}(t)|$ and especially $S_\gamma \geq \frac{1}{2}\sigma_{12}|\cos(\theta_s)|$ in the
case of $\gamma^{cc}$. 
Here, again $S_\gamma$ is obtained by Taylor expansion of $\gamma$ at $\varphi^{m-1}$.

\begin{remark}[Further schemes]
  For further discretization schemes of the bulk energy density $W$ we refer for example to
  \cite{GruenMetzger__CHNS_decoupled,
  GonzalesTierra_linearSchemes_CH,
  2014-WuZwietenZee-StabilizedSecondOrderConvecSplittingCHmodels}. 
  Second order schemes for the Cahn--Hilliard equation 
  are for example proposed and analyzed in 
\cite{GonzalesTierra_linearSchemes_CH,
  2014-Tierra_SecondOrder_and_Adaptivity,
  2014-WuZwietenZee-StabilizedSecondOrderConvecSplittingCHmodels,
  2016-DiegelWangWise-StabilityConvergenceSecondOrderSchemeCH,
  2018-WangYu-SecondOrderCahnHilliardStabilized}. 
Recently the Invariant Energy Quadratization approach for $W\equiv W^{poly}$ was proposed in \cite{2017-YangJu-InvariantEnergyQuadratization}.
  It is used in \cite{2018-YangYu-SecondOrderCHMCL_IEQ} for the Cahn--Hilliard moving contact line model
  together with a Crank--Nicolson and a BDF2 scheme in time.
However, typically for these schemes  
  either higher regularity than $W^s$ provides is required  for $W$,
  or the particular $W^{poly}$ is assumed and necessary.
  Moreover, unconditional energy stability is typically not proven yet.
  \end{remark}

\begin{remark}[Energy Consistency]
  \label{rm:S:energyStable}
  Considering the energy inequality from \Cref{thm:M:enerInequDisc}, the terms in the first line correspond
  to the discrete energy of the system, while the second line corresponds to the energy dissipation
  of the system, and the third line corresponds to numerical dissipation of the scheme. 
Based on this we can define four different values to define the energies in our system.
These are the energy $E^m$ at time instance $m$, the physical dissipation $\Delta^m_p$ at time instance $m$,
the energy $E_g^m$ introduced from gravity at time instance $m$,
and the numerical dissipation $\Delta^m_n$ at time instance $m$. They are defined by
\begin{align}
  E^m &:= \frac{1}{2}\int_\Omega \rho^m|v^m_h|^2\dx 
    +\sigma\int_\Omega\frac{\epsilon}{2}|\nabla \varphi^m_h|^2
    + \frac{1}{\epsilon}W(\varphi^m_h)\dx
    + \int_{\partial\Omega} \gamma(\varphi^m_h)\ds,\\
\Delta^m_p &:= \tau\int_\Omega 2\eta^{m-1}|Dv^m_h|^2 \dx 
    + \tau \int_\Omega b |\nabla \mu^m_h|^2\dx \nonumber\\
&\phantom{:= }+\tau\int_{\partial\Omega}l(\varphi^{m-1})|v^m_{h,tan}|^2\ds
    + \tau \int_{\partial\Omega} r |B^m_h|^2\ds,\label{eq:S:phyDiss} \\
E_g^m & :=\tau\int_\Omega \rho^{m-1}g\cdot v_h^m\dx,\\
\Delta^m_n &:= E^{m-1} + E_g^m - E^m - \Delta^m_p.  
\end{align}
We call a scheme thermodynamically consistent if \Cref{thm:M:enerInequDisc} is fulfilled
  without the explicit form of the numerical dissipation, thus if
\begin{equation}
 	E^m + \Delta^m_p \leq E^{m-1} + E_g^m
 	\label{eq:S:EnergyInequ}
 	\end{equation}
holds, i.e., $\Delta_n^m \geq 0$.
We investigate this energy inequality numerically in \Cref{sec:N}.  
\end{remark}

  \section{Numerics}
\label{sec:N}

In this section we numerically investigate the three schemes under consideration.
In \Cref{ssec:N:rising_bubble} we briefly give results from the well-known second benchmark in
\cite{Hysing_Turek_quantitative_benchmark_computations_of_two_dimensional_bubble_dynamics},
where no contact line motion is included, to estimate the difference of the schemes 
in the bulk.
In \Cref{ssec:N:sliding_droplet} we thereafter investigate
the behavior of the contact line for a gravity-driven droplet sliding on an inclined surface in a two-dimensional setting.

We implement the schemes in Python3 using FEniCS 2018.1.0 \cite{fenics1, fenics_book}.
For the solution of the arising nonlinear and linear systems and subsystems the software suite 
PETSc 3.8.4~\cite{petsc_webpage, petsc-user-ref, petsc-efficient} 
together with the direct linear solver MUMPS 5.1.1 \cite{mumps_1, mumps_2} are utilized.
Note, that we do not apply any preconditioning or subiterations except for the Newton iterations.

\subsection{Rising Bubble}
\label{ssec:N:rising_bubble}
At first, we discuss the accuracy of the
proposed schemes without moving contact lines.
Later on, this allows for an evaluation of the influence of the schemes on the moving contact line.
We employ the quantitative benchmark case proposed 
in~\cite{Hysing_Turek_quantitative_benchmark_computations_of_two_dimensional_bubble_dynamics}.
In~\cite{Aland_Voigt_bubble_benchmark} it is found, 
that three different diffuse interface approximations together 
with the polynomial potential $W^{poly}$ agree well with the sharp interface 
results from~\cite{Hysing_Turek_quantitative_benchmark_computations_of_two_dimensional_bubble_dynamics}.
In~\cite{GarckeHinzeKahle_CHNS_AGG_linearStableTimeDisc} the benchmark 
is used to compare to a phase field model with a relaxed double obstacle
potential.

\subsubsection{Setup}
\Cref{tab:rb_setup} lists the properties of our simulations, 
which correspond to the second benchmark case in~\cite{Hysing_Turek_quantitative_benchmark_computations_of_two_dimensional_bubble_dynamics}. 
For details on the setup we refer to the references above.
Note, that $\sigma_{12}$ denotes the physical surface tension, yielding $\sigma\approx1.24$ for $W^{\hp=100}$, $\sigma\approx1.22$ for $W^{\hp=10}$ and $\sigma\approx 2.07$ for $W^{poly_2}$.
Following~\cite{Hysing_Turek_quantitative_benchmark_computations_of_two_dimensional_bubble_dynamics}, 
we introduce a characteristic length scale $L=2r_0$, where $r_0$  equals the initial radius of the bubble, 
and a characteristic velocity scale $U=\sqrt{Lg}$. 
To classify our simulations we indicate in \Cref{tab:rb_setup} 
the dimensionless numbers Reynolds $\Rey = \frac{\rho_l U L}{\eta_l}$, 
E\"otv\"os (or Bond) $\Eo=\frac{\rho_l g L^2}{\sigma}$, 
Capillary $\Ca = \frac{\eta_l U}{\sigma}$, 
Atwood $\At = \frac{\rho_l - \rho_g}{\rho_l + \rho_g}$, 
Cahn $\Cn = \frac{\epsilon}{L}$ and P\'eclet $\Pe = \frac{L U \epsilon}{b \sigma}$, 
see~\cite{Khatavkar2006}.

We apply no-slip boundary conditions for the velocity on the top and bottom
walls, free-slip on the left and symmetry at the centerline through the bubble at $x=0.5$.
Similar to~\cite{GarckeHinzeKahle_CHNS_AGG_linearStableTimeDisc}, we set $b=10^{-3}\epsilon$ and $\epsilon=0.02$.
The time discretization step is set to different values and the final time is $t=3$.
We initialize the simulations by solving the Cahn--Hilliard equations without convection until a steady state is reached.
In total, we perform 7 distinct simulations using the three schemes from
\Cref{sec:S} with $W^{poly2}$ and $W^{\hp}$ with $\hp=100$, and one additional
simulation with $\hp=10$ for the fully linear and  stabilized scheme with
$W^{\hp}$, see the first three columns in \Cref{tab:rb_results}.
To get an impression of the influence of the discretization parameters, we use different values for $\tau$ and $h_{min}$, 
see columns four and five in \Cref{tab:rb_results}.

In \cite{Hysing_Turek_quantitative_benchmark_computations_of_two_dimensional_bubble_dynamics}
a set of benchmark parameters is used, that we define in the phase field setting as follows.

The center of mass is calculated using
\begin{align}
		(x_c, y_c) &= \frac{\int_\Omega (x, y)\frac{1+\varphi}{2}\dx}{\int_\Omega \frac{1+\varphi}{2}\dx}\;,
		\label{eq:S:BM-y}
\end{align}
where $\frac{1+\varphi}{2} = 1$ indicates  the droplet.

We define the mean velocity in unit direction $a\in \mathbb R^2$ as
\begin{align}
  v_a = \frac{\int_\Omega v \cdot a \frac{1+\varphi}{2}\dx}{\int_\Omega \frac{1+\varphi}{2}\dx}\;.
  \label{eq:S:BM-v}
\end{align}
If $a$ denotes the unit vector in rising direction, this is called rising velocity $v_r$, while if $a$ points in sliding
direction, we call this value sliding velocity $v_s$.

Finally we define the stretching of the interface as
\begin{align}
c = \frac{c_W\int_\Omega (\frac{\epsilon}{2}|\nabla \varphi_0|^2 + \frac{1}{\epsilon}W(\varphi_0))\dx}{c_W\int_\Omega (\frac{\epsilon}{2}|\nabla \varphi|^2 + \frac{1}{\epsilon}W(\varphi))\dx}\;.
		\label{eq:S:BM-c}
\end{align}
Here the denominator denotes an approximation to the length of the interface represented by $\varphi$, and the numerator
denotes the same for the initial phase field $\varphi_0$.
If $\varphi_0$ denotes a sphere, this is equivalent to the circularity as defined in
\cite{Hysing_Turek_quantitative_benchmark_computations_of_two_dimensional_bubble_dynamics} 
as the volume of the bubble is constant over time.

\begin{remark}[Choice of $\hp$ in $W^\hp$]\label{rm:relaxation}
For the choice of the relaxation parameter $\hp$ in $W^\hp$, see \Cref{rm:M:freeEnergies}, several points must be considered.
To reduce the inter-mixing between the phases and increase the rate at which the equilibrium profile of $\varphi$ 
is reestablished after a deformation, it is desirable to exhibit a large spinodal region and subsequently a small metastable region~\cite{Donaldson2011}.
The metastable region of the bulk energy potential $W^{\hp}$ is located between $1 > |\varphi|> \xi^{-1} = 1-\frac{1}{\hp}$,
while the metastable region for $W^{poly}$ is located between $1> |\varphi| > \sqrt{3^{-1}} \approx 0.577$.
Thus already for small values of $\hp$, say $\hp = 10$, the metastable region of $W^{\hp}$ is significantly smaller
than the metastable region of $W^{poly}$.
Furthermore, referring to \cite{GarckeHinzeKahle_CHNS_AGG_linearStableTimeDisc,Kahle_Linfty_bound}, the value of $\hp$
controls the deviation of the $L^\infty$ norm of $\varphi$ from 1. 
Since $\rho$ and $\eta$ directly depend on $\varphi$ a small deviation is desirable, which is achieved by 
a large value of $\hp$.

On the other hand, the stable decoupled and linear scheme, \Cref{ssec:S:V:stablelinear}, 
includes a stabilization parameter $S_W$ which has to be chosen like $S_W>\hp/2$ for $W^\hp$.
In this case a large value of $\hp$ has a severe impact on the overall dynamics
as the stabilization can be interpreted as adding the quadratic potential $\frac{S_W}{2}\|\varphi-\varphi^{m-1}\|^2$
to $W$ for given $\varphi^{m-1}$. 
For large values of $S_W$ thus $\varphi \equiv \varphi^{m-1}$ is preferred.
To show the influence of $S_W$ in the case  $W\equiv W^\hp$ we test the linear and decoupled scheme with two values of $\hp$.
\end{remark}

\begin{table}
\begin{tabular}{cccccccccc}
\toprule
$\sigma_{12}$	&$\rho_l$	&$\rho_g$	&$\eta_l$	&$\eta_g$	&$g_y$&$\Rey$&$\Eo$&$\Ca$&$\At$\\
\cmidrule(r){1-6}\cmidrule(l){7-10}
1.96			&1000		&1			&10			&0.1		&-0.98&35&125&3.5&0.99\\
\cmidrule(r){1-6}\cmidrule(l){7-10}
$\epsilon$		&$b$		&&&&&$\Cn$		&$\Pe$\\
\cmidrule(r){1-6}\cmidrule(l){7-10}
$\num[scientific-notation = true]{2e-2}$&$\num[scientific-notation = true]{2e-5}$&&&&&0.04&178\\
\bottomrule
\end{tabular}
\caption{Parameters used in the rising bubble simulations.}
\label{tab:rb_setup}
\end{table}

\subsubsection{Results}

The resulting benchmark values are listed in \Cref{tab:rb_results}.
As it is not even clear in the sharp interface simulations whether or not topological changes develop, e.g. the separation of trailing gas filaments, we compare our results only up to time instance $t=2$, see~\cite{Aland_Voigt_bubble_benchmark} . 
Our results show that all the schemes give very similar results compared to the sharp interface solution even for the significantly larger time step $\tau=0.001$ and on a coarse mesh with $h_{min}=0.0125$. 
In general, decoupling the two systems has a very small impact on the benchmark values.
For even larger $\tau=0.008$ the coupled scheme is advantageous against the decoupled schemes.
The latter might be explained by the fact, that the decoupling adds artificial diffusion of order $\tau$ to the Cahn--Hilliard system, see \eqref{eq:S:V:transport}. Thus we expect a stronger influence of this decoupling for larger values of $\tau$. 
As expected, the stabilized linear scheme together with $W^{\hp=100}$ hinders the dynamics of the rising bubble. However, the results improve significantly with smaller $\hp$.
All schemes together with $W^{\hp=100}$ give slightly better results compared to $W^{poly_2}$ except the decoupled/linear scheme. 
However, for very small $\tau$ and $h_{min}$ the results converge towards similar values.

Concerning the computational effort the difference in using $W^{\hp=100}$ or $W^{poly_2}$ is insignificant. The decoupled/nonlinear and decoupled/linear schemes are around 1.4 respectively 2.0 times faster than the coupled scheme. In the nonlinear schemes 2-3 Newton iterations are needed per time step. Note that the performance results strongly dependent on the solver and whether sophisticated preconditioning is applied. For an efficient preconditioner for the coupled/nonlinear system we refer to~\cite{Bosch2016}.

\begin{table}
    	\csvreader[table head=\toprule 
			Bulk pot.&
			Deco./Lin.?&
			$\tau$&
			$h_{min}$&
			$y_c$&
			$v_{\max}$&
			$t_{v_{\max}}$&
			$c_{\min}$&
			$t_{c_{\min}}$
			\\\midrule,
table foot=\midrule\midrule 
			ref. diffuse&
			&
			0.004&
&
			0.8994&
			0.2503&
			0.7960&
			0.6684&
			1.9760\\
			ref. sharp&
			&
			0.0001953125& 
&
			0.9154&
			0.2502&
			0.7313&
			0.6901&
			2.0000\\\bottomrule,
head to column names, 
  		late after last line=\\,
		before reading=\centering\sisetup{table-number-alignment=center, table-format=1.3,round-mode=places,round-precision=4},
		tabular={ll
			S[scientific-notation = true,table-format=0.2e1, round-precision=2]
			S[table-format=1.4, round-precision=5]
			S
			S
			S
			S
			S[round-precision=2]
			},
		late after line=\ifnumequal{\thecsvrow}{15}{\\}{\\}
			\ifnumequal{\thecsvrow}{10}{\midrule}{}
			\ifnumequal{\thecsvrow}{13}{\midrule}{}
		]{data/risingbubble/risingbubble_hysing.csv}{}{\pot&\decoupled/\linear&\dt&\dx&\yct&\vmax&\tvmax&\cmin&\tcmin}
\caption{Benchmark values for the second benchmark proposed in \cite{Hysing_Turek_quantitative_benchmark_computations_of_two_dimensional_bubble_dynamics}.
Here $y_c$ denotes the center of mass at time $t=2$, $v_{\max}$ denotes the maximum rising velocity that appears at time $t_{v_{\max}}$, and
$c_{\min}$ denotes the minimal circularity that appears at time $t_{c_{\min}}$. See \eqref{eq:S:BM-y}--\eqref{eq:S:BM-c} or 
\cite{Hysing_Turek_quantitative_benchmark_computations_of_two_dimensional_bubble_dynamics} for the definition of these values.
As reference we choose the results from the 3rd group participating in \cite{Hysing_Turek_quantitative_benchmark_computations_of_two_dimensional_bubble_dynamics} 
(ref. sharp) and for model 3 in \cite{Aland_Voigt_bubble_benchmark} (ref. diffuse, $\epsilon=0.02$). We note that in the latter 
piece wise quadratic finite elements are used for $\varphi$ and $\mu$, which is the reason, why we do not provide a value for $h_{\min}$. 
Further,  we do not provide a value for $h_{\min}$ for the reference solution in the sharp setting
because here a different numerical approach is used, that can not directly be compared with the present situation.}
\label{tab:rb_results}
\end{table}

\subsection{Sliding Droplet}
\label{ssec:N:sliding_droplet}
To compare the influence of the numerical schemes from \Cref{sec:S} on the moving contact line, we perform
simulations of single droplets sliding down an inclined surface.
Besides the effect of gravity on the droplet movement, this test case allows to observe both an advancing and receding contact line.

\subsubsection{Setup}
In \Cref{fig:sd_setup} the initial configuration is shown and \Cref{tab:sd_setup} lists the properties of our simulations. 
The fluid properties are chosen to be similar to the first rising bubble test case 
in~\cite{Hysing_Turek_quantitative_benchmark_computations_of_two_dimensional_bubble_dynamics}.
Note, that $\sigma_{12}$ denotes the physical surface tension, yielding $\sigma\approx15.58$ for $W^{\hp=100}$, $\sigma\approx15.34$ for $W^{\hp=10}$ and $\sigma\approx25.98$ for $W^{poly_2}$.
A liquid droplet with radius $r_0=0.25$ is placed in a $0.5\times2.0$ rectangular domain at ($0$,$1.5$) on a smooth, solid surface with an initial contact angle of $90\degree$.
The inclination angle of the plate is $45\degree$.
The density of the droplet is greater than that of the surrounding fluid.
We have no-slip boundary conditions for the velocity on the left and right side and free-slip on the top side.
The conditions~\cref{eq:M:6_NS_BC,eq:M:7_CH_BC} are applied on the bottom solid surface, see \Cref{fig:sd_setup}.
The influence of the boundary conditions~\cref{eq:M:6_NS_BC,eq:M:7_CH_BC} on the sliding droplets 
are examined by varying the static contact angle $\thetaori$, the relaxation factor $r$ and the slip coefficient $l$, 
see the fifth to seventh column in \Cref{tab:sd_results}. 
We vary the contact angle from super-hydrophilic ($5\degree$) to super-hydrophobic ($150\degree$)~\cite{Law2014}.
We initialize the simulations by solving the Cahn--Hilliard equations without convection and a contact angle of $90\degree$ until a steady state is reached.

In a first step, we compare 21 distinct simulations obtained with the three schemes from \Cref{sec:S} with $W^{poly_2}$ and $W^{\hp}$ with $\hp=100$, and one additional
simulation with $\hp=10$ for the fully linear and stabilized scheme with $W^{\hp}$, see the first two columns in \Cref{tab:sd_results}. 
These simulation are performed with a relatively coarse mesh ($h_{min} = 0.0125)$ and large time step ($\tau=0.001$) to discuss the practical applicability of the solution schemes.
Afterwards, we show the thermodynamic consistency of the schemes and compare the physical and numerical dissipation rates.
To discuss the influence of the time step size on the results, we perform 14 additional simulations with $\tau$ between 0.008 and 0.00025.
Finally, we perform 8 simulations with varying interfacial thicknesses $\epsilon$ on a very fine mesh ($h_{min}=0.0002$) to briefly discuss the convergence to the sharp-interface limit.
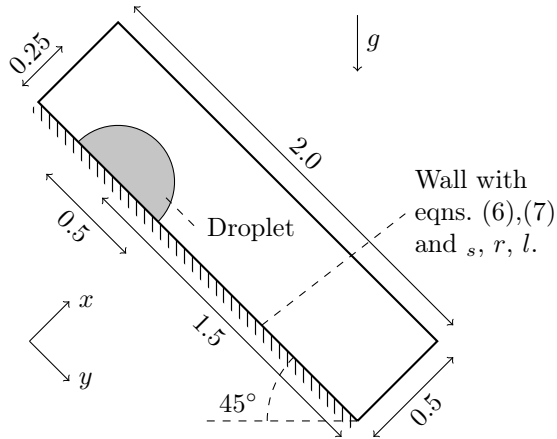
\begin{figure}
\centering
\tikzexternalexportnextfalse 
\tikzsetnextfilename{slidingdroplet_setup}
\begin{tikzpicture}[scale=3]
	\filldraw[rotate=45, fill=lightgray, fill opacity=0.9] (0, 1.75) node(r1){} arc(90:-90:0.25)node(r2){};
\draw[thick,rotate=45] (0,0) node (a){} -- (0.5,0) node (b){} -- (0.5,2) node (c){} -- (0,2) node (d){} -- (0,0);
\fill[rotate=45,pattern=vertical lines] (a) -- ++(0, 2) -- ++(-0.05,0) -- ++(0,-2) -- ++(0.05,0); 
\draw[rotate=45, <->] ($(a)+(0,-0.1)$)--node[rotate=45,below] {$0.5$} ($(b)+(0,-0.1)$) ;
	\draw[rotate=45, <->] ($(b)+(+0.1,0)$)--node[rotate=-45,above] {$2.0$} ($(c)+(+0.1,0)$) ;
\draw[rotate=45, <->] ($(a)+(-0.1,0)$)--node[rotate=-45,below] {$1.5$} ($(a)+(-0.1,1.5)$) ;
	\draw[rotate=45, <->] ($(r1)+(-0.2,0)$)--node[rotate=-45,below] {$0.5$} ($(r2)+(-0.2,0)$) ;
	\draw[rotate=45, <->] ($(d)+(0,+0.1)$)--node[rotate=45,above] {$0.25$} ($(d)+(0.25,+0.1)$) ;
\draw[dashed] (0,0)--(-0.7,0);
	\draw[dashed] (-0.4,0) node[above left]{$45\degree$} arc(180:135:0.4);
\begin{scope}[rotate around={45:(-1.1,0)}]
	\draw[->] (-1.1,0.5)--(-0.85,0.5) node[right]{$x$};
	\draw[->] (-1.1,0.5)--(-1.1,0.25) node[right]{$y$};
	\end{scope}
\draw[dashed, rotate=45] (0.1,1.3)--(0.1,1.1) node[right]{Droplet};
\draw[dashed, rotate=45] (0,0.6)--(0.8,0.5) node[right, align=left]{Wall with\\eqns.~\cref{eq:M:6_NS_BC},\cref{eq:M:7_CH_BC}\\and $\theta_s$, $r$, $l$.};
\draw[->] (0.0,1.8)-- node[right]{$g$}(0.0,1.55);
\end{tikzpicture}

 \caption{Initial configuration for the sliding droplet simulations.}
\label{fig:sd_setup}
\end{figure}

\begin{table}
\begin{tabular}{cccccccccc}
\toprule
$\sigma_{12}$	&$\rho_l$	&$\rho_g$	&$\eta_l$	&$\eta_g$	&$g$&$\Rey$&$\Eo$&$\Ca$&$\At$\\
\cmidrule(r){1-6}\cmidrule(l){7-10}
24.5			&1000		&100		&10			&1			&-0.98&35&10&0.28&0.81\\
\cmidrule(r){1-6}\cmidrule(l){7-10}
$\epsilon$		&$b$		&		&&&&$\Cn$		&$\Pe$\\
\cmidrule(r){1-6}\cmidrule(l){7-10}
$\num[scientific-notation = true]{2e-2}$&$\num[scientific-notation = true]{2e-5}$&&&&&0.04&14\\
\bottomrule
\end{tabular}
\caption{Parameters used in the sliding droplet simulations.}
\label{tab:sd_setup}
\end{table}

\begin{remark}[Choice of $r$ and $l$]\label{rm:bc_parameters}
For meaningful values of the relaxation parameter $r$ and the slip coefficient $l$, 
we write \cref{eq:M:6_NS_BC,eq:M:7_CH_BC} in non-dimensionalized form,
\begin{align}
\left[\frac{\Ca}{\Cn} 2\hat \eta D\hat v \nu_\Omega + \frac{\Ca}{\Cn} \frac{L}{l_s} - \hat L \hat \nabla \varphi\right] \times \nu_\Omega &= 0, \\
\frac{\Ca}{\Cn}\frac{r_s}{L}(\partial_{\hat t}\hat\varphi + \hat v \hat\nabla \varphi) + \hat L&=0,
\end{align}
in which $\Ca=\eta_l U/\sigma$ and $\Cn=\epsilon/L$ are the Capillary number 
respectively the Cahn number, and $L$ and $U$ are some characteristic macroscopic length scale respectively velocity.
We choose $r=r_s\eta_l$ and $l=\eta_l/l_s$ such that the dimensionless groups $\frac{\Ca}{\Cn} \frac{L}{l_s}$ 
and $\frac{\Ca}{\Cn}\frac{r_s}{L}$ are of~$\mathcal{O}(1)$, see~\cite{Sibley2013}.
\end{remark}

As benchmark values we again use the three values defined in \Cref{ssec:N:rising_bubble} with minor modifications. 
For the center of mass, we use a coordinate system that is aligned with the inclined plate, see \Cref{fig:sd_setup},
and for the now called sliding velocity, we use for $a$ the unit vector in tangential direction to the inclined plate.
The stretching of the interface is defined as before.

Additionally, we evaluate two values that are specific for the moving contact line setup.
For both the receding and advancing contact line the position of the contact points 
and a dynamic (or apparent) contact angle measured at some height above the contact points are evaluated.
The position of a contact point is defined by
\begin{align}
	y_p = y \mbox{ on } \partial\Omega \mbox{ where }\varphi=0\;,
\end{align}
and the dynamic contact angle $\thetaori_d$ is calculated by linear interpolation between $y_p$ and 
the intersection $y_{p+\Delta}\mbox{ where }\varphi=0$ and $\Delta=h_\text{min}$, 
see \Cref{fig:post_dyn_contactangle} and~\cite{Omori2017}.

\begin{figure}
\centering

\tikzsetnextfilename{dyntheta_post}

\tikzset{cross/.style={cross out, draw=black, minimum size=2*(#1-\pgflinewidth), inner sep=0pt, outer sep=0pt},
cross/.default={1pt}}

\begin{tikzpicture}[scale=3]
	\filldraw[fill=lightgray, rotate=-45, fill opacity=0.9] (0.25, 0) arc(180:0:0.25);
\draw[thick, rotate=-45] (0,0) node (a){} -- (1,0) node (b){};
\fill[pattern=vertical lines, rotate=-45] (0,0) -- (1,0) -- (1,-0.05) -- (0,-0.05) -- (0, 0); 
\draw[rotate=-45] (0.75,0) circle (0.10);
	\draw[rotate=-45] (0.85, 0) -- (1.24,0.5) node(r1){};
	\draw[rotate=-45] (0.65, 0) -- (0.26,0.5) node(r2){};
	\draw[rotate=-45] (0.75,1.0) circle (0.70);
	\clip[rotate=-45] (0.75,1.0) circle (0.70);
\draw[thick, rotate=-45] (0,0.6) -- (2,0.6);
\fill[pattern=vertical lines, rotate=-45] (0,0.6) -- (2,0.6) -- (2,0.295) -- (0,0.295) -- (0, 0.6); 
\draw[rotate=-45, thick] (0.75, 0.6) to[out=60,in=-40] node[above right](bla){$\mathclap{\varphi=0}$}(0.5, 1.8);
	\draw[rotate=-45, dashed] (0.75, 0.6) node[draw, circle, minimum size=3.0pt, solid] {} -- node[draw, circle, minimum size=3.0pt, solid](cross2){} (1.0, 1.5);
\draw[rotate=-45, <->] (1.1, 0.6)--node[ below right, rotate=-0]{$h_\text{min}$}(1.1, 1.05);
\draw[rotate=-45, dashed] (0.45, 0.6) node[rotate=0, above]{$\thetaori_d$} arc(180:75:0.3);
\begin{pgfonlayer}{background}
		\clip[rotate=-45] (0.75,1.0) circle (0.70);
			\draw[dashed, rotate=-45] (-1, 1.05) -- (2, 1.05);
		\fill[rotate=-45,lightgray, fill opacity=0.5] (0,0.6) -- (0.75, 0.6) to[out=60,in=-40] (0.5, 1.8) -- (0, 1.8) -- (0, 0.6);
	\end{pgfonlayer}
\end{tikzpicture}

 \caption{Measurement of the dynamic (or apparent) contact angle.}
\label{fig:post_dyn_contactangle}
\end{figure}

\subsubsection{Results}
\paragraph{Comparison of droplet shapes and characteristic values obtained on a coarse mesh and with a large time step}
In dependence on $\thetaori_s$, $r$ and $l$ the droplets show characteristic developments.
The calculated shapes for different combinations of $\thetaori_{s}$, $r$ and $l$ at
 $t=0.0; 0.5; 1.0; 1.5; 2.0$ are presented in \Cref{fig:sd_isolines}.
All the simulated droplets show the expected physical behavior: 
on the hydrophobic surface (third row) the droplet contracts, whereas the droplets spread on the hydrophilic surface (second row).
In addition, the droplets slide down the surface due to the density difference and gravity.
The different behavior at the advancing and receding contact points is visible and one can observe nonequilibrium contact angles in the second and third row. 
It is evident that there are virtually no differences between the coupled (solid black) and decoupled schemes (crosses) for all contact angles.
In contrast, in the linearized scheme with $\hp=100$ (dashed black) the dynamics are greatly reduced.
Similar as in the rising bubble case, a smaller $\hp$ ($\hp=10$, dashed gray) leads to improvements.
For comparison, we show the behavior of the droplet with the coupled scheme and $W^{poly_2}$ (dotted line).
Here, a slightly different droplet shape is observed especially for later times and large contact angles.

\begin{figure}
	\centering
	\vspace*{-3.5cm}
	\makebox[\textwidth][c]{
	\tikzsetnextfilename{sd_isolines}
	\begin{tikzpicture}
		\begin{groupplot}[,group style={
			rows=3, columns=4, horizontal sep=5pt, vertical sep=5pt
			,x descriptions at=edge bottom
			,y descriptions at=edge left 
			}
                ,width=2.0cm, height=8cm,
                ,grid=both
		,xmin=0, xmax=0.4,
		,ymin=0.4, ymax=2.0
		,scale only axis=true,
		,xlabel near ticks, ylabel near ticks
		,ylabel style={rotate=180}
		,xtick={0.1,0.3}
		,ytick={0.5,1.0,1.5}
            	]
\nextgroupplot[
			title={$t=0.5$}, 
			ylabel={$\thetaori_s=90\degree$, $r=0$, $l=1e6$}
			,legend to name=grouplegend2
			,legend columns=-1,
		] 
			\addplot[thick, smooth] plot file{data/slidingdroplet/isolines/obstacle100_coupled_nonlinear_0.001_0.0125_90.0_0.0_1000000.0_isolines_0.5.dat};
			\addplot[thick, mark=x, mark repeat=15, mark options={scale=2}] plot file {data/slidingdroplet/isolines/obstacle100_decoupled_nonlinear_0.001_0.0125_90.0_0.0_1000000.0_isolines_0.5.dat};
			\addplot[thick, dashed] plot file {data/slidingdroplet/isolines/obstacle100_decoupled_linear_0.001_0.0125_90.0_0.0_1000000.0_isolines_0.5.dat};
			\addplot[thick, mark=o, mark repeat=15] plot file {data/slidingdroplet/isolines/obstacle10_decoupled_linear_0.001_0.0125_90.0_0.0_1000000.0_isolines_0.5.dat};
			\addplot[thick, mark=*, mark repeat=15] plot file {data/slidingdroplet/isolines/poly2_coupled_nonlinear_0.001_0.0125_90.0_0.0_1000000.0_isolines_0.5.dat}; 
			\addlegendentry{coupl., $W^{\hp=100}$} 
			\addlegendentry{decoupl., $W^{\hp=100}$}
			\addlegendentry{decoupl./lin, $W^{\hp=100}$}
			\addlegendentry{decoupl./lin, $W^{\hp=10}$} 
			\addlegendentry{coupl., $W^{poly_2}$} 
		\nextgroupplot[
			title={$t=1.0$},
		]
			\addplot[thick, smooth] plot file{data/slidingdroplet/isolines/obstacle100_coupled_nonlinear_0.001_0.0125_90.0_0.0_1000000.0_isolines_1.dat};
			\addplot[thick, mark=x, mark repeat=15, mark options={scale=2}] plot file {data/slidingdroplet/isolines/obstacle100_decoupled_nonlinear_0.001_0.0125_90.0_0.0_1000000.0_isolines_1.dat};
			\addplot[thick, dashed] plot file {data/slidingdroplet/isolines/obstacle100_decoupled_linear_0.001_0.0125_90.0_0.0_1000000.0_isolines_1.dat};
			\addplot[thick, mark=o, mark repeat=15] plot file {data/slidingdroplet/isolines/obstacle10_decoupled_linear_0.001_0.0125_90.0_0.0_1000000.0_isolines_1.dat};
			\addplot[thick, mark=*, mark repeat=15] plot file {data/slidingdroplet/isolines/poly2_coupled_nonlinear_0.001_0.0125_90.0_0.0_1000000.0_isolines_1.dat};
		\nextgroupplot[
			title={$t=1.5$},
		]
			\addplot[thick, smooth] plot file{data/slidingdroplet/isolines/obstacle100_coupled_nonlinear_0.001_0.0125_90.0_0.0_1000000.0_isolines_1.5.dat};
			\addplot[thick, mark=x, mark repeat=15, mark options={scale=2}] plot file {data/slidingdroplet/isolines/obstacle100_decoupled_nonlinear_0.001_0.0125_90.0_0.0_1000000.0_isolines_1.5.dat};
			\addplot[thick, dashed] plot file {data/slidingdroplet/isolines/obstacle100_decoupled_linear_0.001_0.0125_90.0_0.0_1000000.0_isolines_1.5.dat};
			\addplot[thick, mark=o, mark repeat=15] plot file {data/slidingdroplet/isolines/obstacle10_decoupled_linear_0.001_0.0125_90.0_0.0_1000000.0_isolines_1.5.dat};
			\addplot[thick, mark=*, mark repeat=15] plot file {data/slidingdroplet/isolines/poly2_coupled_nonlinear_0.001_0.0125_90.0_0.0_1000000.0_isolines_1.5.dat}; 
		\nextgroupplot[
			title={$t=2.0$},
		]
			\addplot[thick, smooth] plot file{data/slidingdroplet/isolines/obstacle100_coupled_nonlinear_0.001_0.0125_90.0_0.0_1000000.0_isolines_2.dat};
			\addplot[thick, mark=x, mark repeat=15, mark options={scale=2}] plot file {data/slidingdroplet/isolines/obstacle100_decoupled_nonlinear_0.001_0.0125_90.0_0.0_1000000.0_isolines_2.dat};
			\addplot[thick, dashed] plot file {data/slidingdroplet/isolines/obstacle100_decoupled_linear_0.001_0.0125_90.0_0.0_1000000.0_isolines_2.dat};
			\addplot[thick, mark=o, mark repeat=15] plot file {data/slidingdroplet/isolines/obstacle10_decoupled_linear_0.001_0.0125_90.0_0.0_1000000.0_isolines_2.dat};
			\addplot[thick, mark=*, mark repeat=15] plot file {data/slidingdroplet/isolines/poly2_coupled_nonlinear_0.001_0.0125_90.0_0.0_1000000.0_isolines_2.dat}; 
\nextgroupplot[
			ylabel={$\thetaori_s=5\degree$, $r=0.35$, $l=140$}
		] 
			\addplot[thick, smooth] plot file{data/slidingdroplet/isolines/obstacle100_coupled_nonlinear_0.001_0.0125_5.0_0.35_140.0_isolines_0.5.dat};
			\addplot[thick, mark=x, mark repeat=15, mark options={scale=2}] plot file {data/slidingdroplet/isolines/obstacle100_decoupled_nonlinear_0.001_0.0125_5.0_0.35_140.0_isolines_0.5.dat};
			\addplot[thick, dashed] plot file {data/slidingdroplet/isolines/obstacle100_decoupled_linear_0.001_0.0125_5.0_0.35_140.0_isolines_0.5.dat};
			\addplot[thick, mark=o, mark repeat=15] plot file {data/slidingdroplet/isolines/obstacle10_decoupled_linear_0.001_0.0125_5.0_0.35_140.0_isolines_0.5.dat};
			\addplot[thick, mark=*, mark repeat=15] plot file {data/slidingdroplet/isolines/poly2_coupled_nonlinear_0.001_0.0125_5.0_0.35_140.0_isolines_0.5.dat}; 
		\nextgroupplot[
		]
			\addplot[thick, smooth] plot file{data/slidingdroplet/isolines/obstacle100_coupled_nonlinear_0.001_0.0125_5.0_0.35_140.0_isolines_1.dat};
			\addplot[thick, mark=x, mark repeat=15, mark options={scale=2}] plot file {data/slidingdroplet/isolines/obstacle100_decoupled_nonlinear_0.001_0.0125_5.0_0.35_140.0_isolines_1.dat};
			\addplot[thick, dashed] plot file {data/slidingdroplet/isolines/obstacle100_decoupled_linear_0.001_0.0125_5.0_0.35_140.0_isolines_1.dat};
			\addplot[thick, mark=o, mark repeat=15] plot file {data/slidingdroplet/isolines/obstacle10_decoupled_linear_0.001_0.0125_5.0_0.35_140.0_isolines_1.dat};
			\addplot[thick, mark=*, mark repeat=15] plot file {data/slidingdroplet/isolines/poly2_coupled_nonlinear_0.001_0.0125_5.0_0.35_140.0_isolines_1.dat};
		\nextgroupplot[
		]
			\addplot[thick, smooth] plot file{data/slidingdroplet/isolines/obstacle100_coupled_nonlinear_0.001_0.0125_5.0_0.35_140.0_isolines_1.5.dat};
			\addplot[thick, mark=x, mark repeat=15, mark options={scale=2}] plot file {data/slidingdroplet/isolines/obstacle100_decoupled_nonlinear_0.001_0.0125_5.0_0.35_140.0_isolines_1.5.dat};
			\addplot[thick, dashed] plot file {data/slidingdroplet/isolines/obstacle100_decoupled_linear_0.001_0.0125_5.0_0.35_140.0_isolines_1.5.dat};
			\addplot[thick, mark=o, mark repeat=15] plot file {data/slidingdroplet/isolines/obstacle10_decoupled_linear_0.001_0.0125_5.0_0.35_140.0_isolines_1.5.dat};
			\addplot[thick, mark=*, mark repeat=15] plot file {data/slidingdroplet/isolines/poly2_coupled_nonlinear_0.001_0.0125_5.0_0.35_140.0_isolines_1.5.dat}; 
		\nextgroupplot[
		]
			\addplot[thick, smooth] plot file{data/slidingdroplet/isolines/obstacle100_coupled_nonlinear_0.001_0.0125_5.0_0.35_140.0_isolines_2.dat};
			\addplot[thick, mark=x, mark repeat=15, mark options={scale=2}] plot file {data/slidingdroplet/isolines/obstacle100_decoupled_nonlinear_0.001_0.0125_5.0_0.35_140.0_isolines_2.dat};
			\addplot[thick, dashed] plot file {data/slidingdroplet/isolines/obstacle100_decoupled_linear_0.001_0.0125_5.0_0.35_140.0_isolines_2.dat};
			\addplot[thick, mark=o, mark repeat=15] plot file {data/slidingdroplet/isolines/obstacle10_decoupled_linear_0.001_0.0125_5.0_0.35_140.0_isolines_2.dat};
			\addplot[thick, mark=*, mark repeat=15] plot file {data/slidingdroplet/isolines/poly2_coupled_nonlinear_0.001_0.0125_5.0_0.35_140.0_isolines_2.dat};
\nextgroupplot[
			ylabel={$\thetaori_s=150\degree$, $r=0.35$, $l=140$}
		] 
			\addplot[thick, smooth] plot file{data/slidingdroplet/isolines/obstacle100_coupled_nonlinear_0.001_0.0125_150.0_0.35_140.0_isolines_0.5.dat};
			\addplot[thick, mark=x, mark repeat=15, mark options={scale=2}] plot file {data/slidingdroplet/isolines/obstacle100_decoupled_nonlinear_0.001_0.0125_150.0_0.35_140.0_isolines_0.5.dat};
			\addplot[thick, dashed] plot file {data/slidingdroplet/isolines/obstacle100_decoupled_linear_0.001_0.0125_150.0_0.35_140.0_isolines_0.5.dat};
			\addplot[thick, mark=o, mark repeat=15] plot file {data/slidingdroplet/isolines/obstacle10_decoupled_linear_0.001_0.0125_150.0_0.35_140.0_isolines_0.5.dat};
			\addplot[thick, mark=*, mark repeat=15] plot file {data/slidingdroplet/isolines/poly2_coupled_nonlinear_0.001_0.0125_150.0_0.35_140.0_isolines_0.5.dat}; 
		\nextgroupplot[
		]
			\addplot[thick, smooth] plot file{data/slidingdroplet/isolines/obstacle100_coupled_nonlinear_0.001_0.0125_150.0_0.35_140.0_isolines_1.dat};
			\addplot[thick, mark=x, mark repeat=15, mark options={scale=2}] plot file {data/slidingdroplet/isolines/obstacle100_decoupled_nonlinear_0.001_0.0125_150.0_0.35_140.0_isolines_1.dat};
			\addplot[thick, dashed] plot file {data/slidingdroplet/isolines/obstacle100_decoupled_linear_0.001_0.0125_150.0_0.35_140.0_isolines_1.dat};
			\addplot[thick, mark=o, mark repeat=15] plot file {data/slidingdroplet/isolines/obstacle10_decoupled_linear_0.001_0.0125_150.0_0.35_140.0_isolines_1.dat};
			\addplot[thick, mark=*, mark repeat=15] plot file {data/slidingdroplet/isolines/poly2_coupled_nonlinear_0.001_0.0125_150.0_0.35_140.0_isolines_1.dat};
		\nextgroupplot[
		]
			\addplot[thick, smooth] plot file{data/slidingdroplet/isolines/obstacle100_coupled_nonlinear_0.001_0.0125_150.0_0.35_140.0_isolines_1.5.dat};
			\addplot[thick, mark=x, mark repeat=15, mark options={scale=2}] plot file {data/slidingdroplet/isolines/obstacle100_decoupled_nonlinear_0.001_0.0125_150.0_0.35_140.0_isolines_1.5.dat};
			\addplot[thick, dashed] plot file {data/slidingdroplet/isolines/obstacle100_decoupled_linear_0.001_0.0125_150.0_0.35_140.0_isolines_1.5.dat};
			\addplot[thick, mark=o, mark repeat=15] plot file {data/slidingdroplet/isolines/obstacle10_decoupled_linear_0.001_0.0125_150.0_0.35_140.0_isolines_1.5.dat};
			\addplot[thick, mark=*, mark repeat=15] plot file {data/slidingdroplet/isolines/poly2_coupled_nonlinear_0.001_0.0125_150.0_0.35_140.0_isolines_1.5.dat}; 
		\nextgroupplot[
		]
			\addplot[thick, smooth] plot file{data/slidingdroplet/isolines/obstacle100_coupled_nonlinear_0.001_0.0125_150.0_0.35_140.0_isolines_2.dat};
			\addplot[thick, mark=x, mark repeat=15, mark options={scale=2}] plot file {data/slidingdroplet/isolines/obstacle100_decoupled_nonlinear_0.001_0.0125_150.0_0.35_140.0_isolines_2.dat};
			\addplot[thick, dashed] plot file {data/slidingdroplet/isolines/obstacle100_decoupled_linear_0.001_0.0125_150.0_0.35_140.0_isolines_2.dat};
			\addplot[thick, mark=o, mark repeat=15] plot file {data/slidingdroplet/isolines/obstacle10_decoupled_linear_0.001_0.0125_150.0_0.35_140.0_isolines_2.dat};
			\addplot[thick, mark=*, mark repeat=15] plot file {data/slidingdroplet/isolines/poly2_coupled_nonlinear_0.001_0.0125_150.0_0.35_140.0_isolines_2.dat};

	\end{groupplot}
	
    	\node (l1) at (group c1r2.west)
      		[rotate=-90,left, xshift=8cm,yshift=-1.7cm]
      		{\footnotesize \ref{grouplegend2}};

	\end{tikzpicture}
	}\caption{Shapes of the sliding droplets calculated with the schemes from \cref{sec:S}. Three different surfaces ranging from superhydrophilic ($5\degree$, middle) to superhydrophobic ($150\degree$, bottom) are compared. The corresponding parameters can be found in~\cref{tab:sd_setup,tab:sd_results}.}
	\label{fig:sd_isolines}
\end{figure}
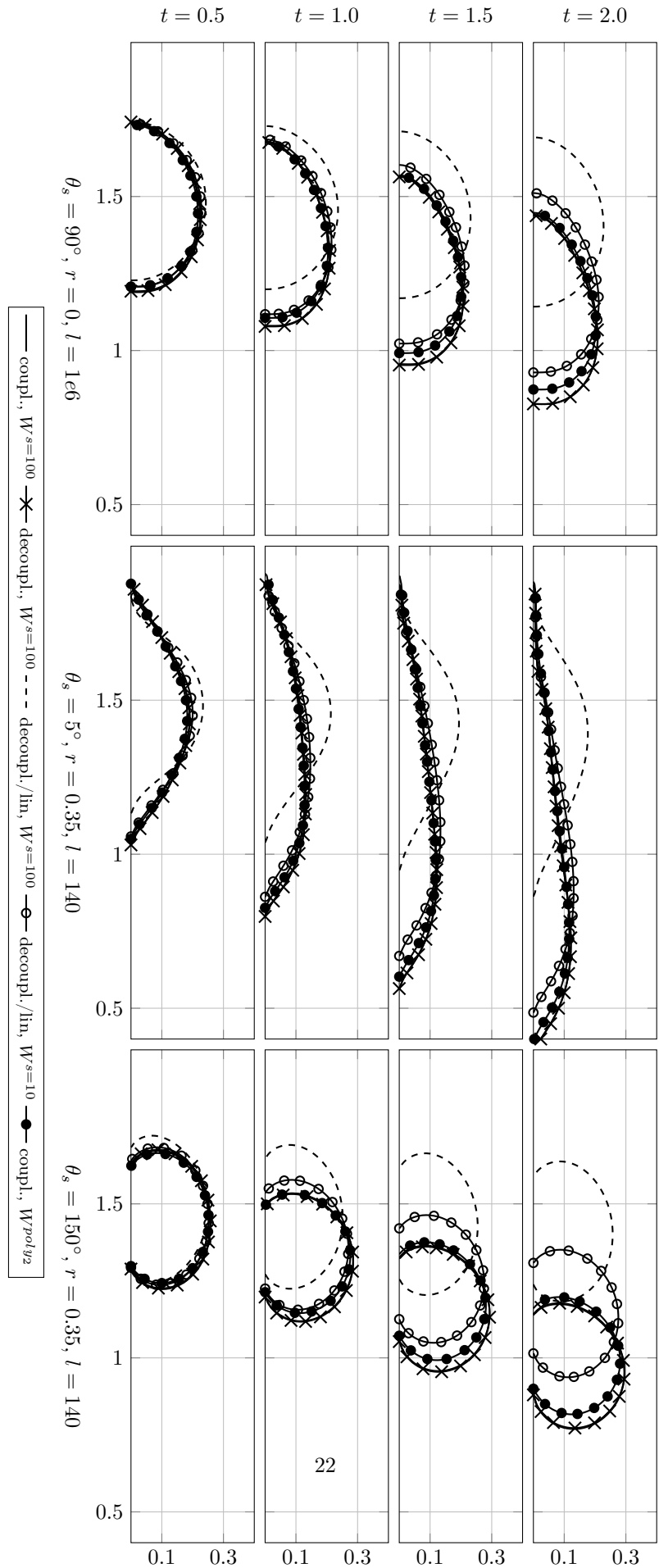
 
The evolution of the slide velocity $v_s$, the position of the contact points $y_p$ along the surface and the dynamic contact angle $\thetaori_d$ are displayed in \Cref{fig:sd_quantities_comparison}.
Again, we show all the schemes together with $W^{\hp=100}$ and in addition the stabilized scheme together with $W^{\hp=10}$ 
and the coupled, nonlinear scheme with $W^{poly_2}$.
To allow for a more quantitative comparison between the solution schemes, we list the characteristic values at $t=2$ in \Cref{tab:sd_results}.
As expected, in simulations without equilibrium contact angle relaxation ($r=0$) and slip ($l=1e6$) (first row) 
the apparent contact angles on both sides of the droplets stay near the equilibrium value $\thetaori_{s} = 90\degree$ the whole time.
In contrast, applying the full boundary conditions \cref{eq:M:6_NS_BC,eq:M:7_CH_BC} with $r=0.35$ and $l=140$ leads to clearly visible advancing and receding contact angles (third column). 
As before, no difference is visible between the coupled (solid black) and decoupled (black crosses) nonlinear schemes for all the characteristic quantities.
The characteristic values at $t=2$ differ only very slightly.
The results with the decoupled, stabilized scheme with $\hp=100$ (dashed black) 
are very far off and show very low sliding velocities (left column) 
and a different contact point behavior (middle column), especially for $\thetaori_{s}=150\degree$ (last row).
The sliding velocities at $t=2$ differ greatly.
In contrast to the comparison in the bulk only, see \Cref{ssec:N:rising_bubble}, 
the usage of the coupled scheme with $W^{poly_2}$ (dotted black) gives results which are noticeable different from the results with $W^{\hp=100}$.
This is most obvious in the simulations with $\thetaori_{s}=5\degree$ (middle row): 
the sliding velocity (left column) is slower and the terminal velocity is reached later. 
In addition, the receding and advancing contact angles are both lower than in the simulations with $W^{\hp=100}$.
For example, at $t=2$, the advancing contact angle for $W^{poly_2}$ is around $12\degree$ smaller than in the nonlinear simulations with $W^{\hp=100}$.
\begin{figure}
	\centering
\makebox[\textwidth][c]{
	\tikzsetnextfilename{sd_plots}
	\begin{tikzpicture}
		\begin{groupplot}[,group style={
			rows=3, columns=3, horizontal sep=20pt, vertical sep=15pt
			,x descriptions at=edge bottom
}
                ,width=0.50\textwidth
		,height=7cm
                ,grid=both
                ,xlabel={time} ,xmin=0, xmax=2
		,xlabel near ticks, ylabel near ticks
]
\nextgroupplot[
			title=$v_s$, 
			ylabel={$\thetaori_s=90\degree$, $r=0$, $l=1e6$}
			,legend to name=grouplegend
			,legend columns=-1
			,ymin=0, ymax=0.4
] 
			\addplot[thick, each nth point=10, filter discard warning=false, unbounded coords=discard] table[x=t,y expr=\thisrow{v_y}*(-1)] {data/slidingdroplet/data/obstacle100_coupled_nonlinear_0.001_0.0125_90.0_0.0_1000000.0_hysing.dat};
			\addplot[thick, each nth point=10, filter discard warning=false, unbounded coords=discard, mark=x, mark repeat=15, mark options={scale=2}] table[x=t,y expr=\thisrow{v_y}*(-1)] {data/slidingdroplet/data/obstacle100_decoupled_nonlinear_0.001_0.0125_90.0_0.0_1000000.0_hysing.dat};
			\addplot[thick, each nth point=10, filter discard warning=false, unbounded coords=discard, dashed] table[x=t,y expr=\thisrow{v_y}*(-1)] {data/slidingdroplet/data/obstacle100_decoupled_linear_0.001_0.0125_90.0_0.0_1000000.0_hysing.dat};
			\addplot[thick, each nth point=10, filter discard warning=false, unbounded coords=discard,mark=o, mark repeat=15] table[x=t,y expr=\thisrow{v_y}*(-1)] {data/slidingdroplet/data/obstacle10_decoupled_linear_0.001_0.0125_90.0_0.0_1000000.0_hysing.dat};
			\addplot[thick, each nth point=10, filter discard warning=false, unbounded coords=discard,mark=*, mark repeat=15] table[x=t,y expr=\thisrow{v_y}*(-1)] {data/slidingdroplet/data/poly2_coupled_nonlinear_0.001_0.0125_90.0_0.0_1000000.0_hysing.dat}; 
			\addlegendentry{coupl., $W^{\hp=100}$} 
			\addlegendentry{decoupl., $W^{\hp=100}$}
			\addlegendentry{decoupl./lin, $W^{\hp=100}$}
			\addlegendentry{decoupl./lin, $W^{\hp=10}$} 
			\addlegendentry{coupl., $W^{poly_2}$} 
		\nextgroupplot[
			title=$y_p$, 
,ymin=0.4, ymax=2.0
		] 
			\addplot[thick, each nth point=10, filter discard warning=false, unbounded coords=discard] table[x=t,y expr=\thisrow{y_p1}] {data/slidingdroplet/data/obstacle100_coupled_nonlinear_0.001_0.0125_90.0_0.0_1000000.0_contactline.dat};
			\addplot[thick, each nth point=10, filter discard warning=false, unbounded coords=discard, mark=x, mark repeat=15, mark options={scale=2}] table[x=t,y expr=\thisrow{y_p1}] {data/slidingdroplet/data/obstacle100_decoupled_nonlinear_0.001_0.0125_90.0_0.0_1000000.0_contactline.dat};
			\addplot[thick, each nth point=10, filter discard warning=false, unbounded coords=discard, dashed] table[x=t,y expr=\thisrow{y_p1}] {data/slidingdroplet/data/obstacle100_decoupled_linear_0.001_0.0125_90.0_0.0_1000000.0_contactline.dat};
			\addplot[thick, each nth point=10, filter discard warning=false, unbounded coords=discard,mark=o, mark repeat=15] table[x=t,y expr=\thisrow{y_p1}] {data/slidingdroplet/data/obstacle10_decoupled_linear_0.001_0.0125_90.0_0.0_1000000.0_contactline.dat};
			\addplot[thick, each nth point=10, filter discard warning=false, unbounded coords=discard,mark=*, mark repeat=15] table[x=t,y expr=\thisrow{y_p1}] {data/slidingdroplet/data/poly2_coupled_nonlinear_0.001_0.0125_90.0_0.0_1000000.0_contactline.dat};
			\addplot[thick, each nth point=10, filter discard warning=false, unbounded coords=discard] table[x=t,y expr=\thisrow{y_p2}] {data/slidingdroplet/data/obstacle100_coupled_nonlinear_0.001_0.0125_90.0_0.0_1000000.0_contactline.dat};
			\addplot[thick, each nth point=10, filter discard warning=false, unbounded coords=discard, mark=x, mark repeat=15, mark options={scale=2}] table[x=t,y expr=\thisrow{y_p2}] {data/slidingdroplet/data/obstacle100_decoupled_nonlinear_0.001_0.0125_90.0_0.0_1000000.0_contactline.dat};
			\addplot[thick, each nth point=10, filter discard warning=false, unbounded coords=discard, dashed] table[x=t,y expr=\thisrow{y_p2}] {data/slidingdroplet/data/obstacle100_decoupled_linear_0.001_0.0125_90.0_0.0_1000000.0_contactline.dat};
			\addplot[thick, each nth point=10, filter discard warning=false, unbounded coords=discard,mark=o, mark repeat=15] table[x=t,y expr=\thisrow{y_p2}] {data/slidingdroplet/data/obstacle10_decoupled_linear_0.001_0.0125_90.0_0.0_1000000.0_contactline.dat};
			\addplot[thick, each nth point=10, filter discard warning=false, unbounded coords=discard,mark=*, mark repeat=15] table[x=t,y expr=\thisrow{y_p2}] {data/slidingdroplet/data/poly2_coupled_nonlinear_0.001_0.0125_90.0_0.0_1000000.0_contactline.dat};		
		\nextgroupplot[
			title=$\thetaori_d$
			,ymin=45,ymax=135
] 
			\addplot[thick, each nth point=30, filter discard warning=false, unbounded coords=discard] table[x=t,y expr=\thisrow{theta_p1}] {data/slidingdroplet/data/obstacle100_coupled_nonlinear_0.001_0.0125_90.0_0.0_1000000.0_contactline.dat};
			\addplot[thick, each nth point=30, filter discard warning=false, unbounded coords=discard, mark=x, mark repeat=15, mark options={scale=2}] table[x=t,y expr=\thisrow{theta_p1}] {data/slidingdroplet/data/obstacle100_decoupled_nonlinear_0.001_0.0125_90.0_0.0_1000000.0_contactline.dat};
			\addplot[thick, each nth point=30, filter discard warning=false, unbounded coords=discard, dashed] table[x=t,y expr=\thisrow{theta_p1}] {data/slidingdroplet/data/obstacle100_decoupled_linear_0.001_0.0125_90.0_0.0_1000000.0_contactline.dat};
			\addplot[thick, each nth point=30, filter discard warning=false, unbounded coords=discard,mark=o, mark repeat=15] table[x=t,y expr=\thisrow{theta_p1}] {data/slidingdroplet/data/obstacle10_decoupled_linear_0.001_0.0125_90.0_0.0_1000000.0_contactline.dat};
			\addplot[thick, each nth point=30, filter discard warning=false, unbounded coords=discard,mark=*, mark repeat=15] table[x=t,y expr=\thisrow{theta_p1}] {data/slidingdroplet/data/poly2_coupled_nonlinear_0.001_0.0125_90.0_0.0_1000000.0_contactline.dat};
			\addplot[thick, each nth point=30, filter discard warning=false, unbounded coords=discard] table[x=t,y expr=\thisrow{theta_p2}] {data/slidingdroplet/data/obstacle100_coupled_nonlinear_0.001_0.0125_90.0_0.0_1000000.0_contactline.dat};
			\addplot[thick, each nth point=30, filter discard warning=false, unbounded coords=discard, mark=x, mark repeat=15, mark options={scale=2}] table[x=t,y expr=\thisrow{theta_p2}] {data/slidingdroplet/data/obstacle100_decoupled_nonlinear_0.001_0.0125_90.0_0.0_1000000.0_contactline.dat};
			\addplot[thick, each nth point=30, filter discard warning=false, unbounded coords=discard, dashed] table[x=t,y expr=\thisrow{theta_p2}] {data/slidingdroplet/data/obstacle100_decoupled_linear_0.001_0.0125_90.0_0.0_1000000.0_contactline.dat};
			\addplot[thick, each nth point=30, filter discard warning=false, unbounded coords=discard,mark=o, mark repeat=15] table[x=t,y expr=\thisrow{theta_p2}] {data/slidingdroplet/data/obstacle10_decoupled_linear_0.001_0.0125_90.0_0.0_1000000.0_contactline.dat};
			\addplot[thick, each nth point=30, filter discard warning=false, unbounded coords=discard,mark=*, mark repeat=15] table[x=t,y expr=\thisrow{theta_p2}] {data/slidingdroplet/data/poly2_coupled_nonlinear_0.001_0.0125_90.0_0.0_1000000.0_contactline.dat};	
\nextgroupplot[
ylabel={$\thetaori_s=5\degree$, $r=0.35$, $l=140$}
,ymin=0, ymax=0.4
		] 
			\addplot[thick, each nth point=10, filter discard warning=false, unbounded coords=discard] table[x=t,y expr=\thisrow{v_y}*(-1)] {data/slidingdroplet/data/obstacle100_coupled_nonlinear_0.001_0.0125_5.0_0.35_140.0_hysing.dat};
			\addplot[thick, each nth point=10, filter discard warning=false, unbounded coords=discard, mark=x, mark repeat=15, mark options={scale=2}] table[x=t,y expr=\thisrow{v_y}*(-1)] {data/slidingdroplet/data/obstacle100_decoupled_nonlinear_0.001_0.0125_5.0_0.35_140.0_hysing.dat};
			\addplot[thick, each nth point=10, filter discard warning=false, unbounded coords=discard, dashed] table[x=t,y expr=\thisrow{v_y}*(-1)] {data/slidingdroplet/data/obstacle100_decoupled_linear_0.001_0.0125_5.0_0.35_140.0_hysing.dat};
			\addplot[thick, each nth point=10, filter discard warning=false, unbounded coords=discard,mark=o, mark repeat=15] table[x=t,y expr=\thisrow{v_y}*(-1)] {data/slidingdroplet/data/obstacle10_decoupled_linear_0.001_0.0125_5.0_0.35_140.0_hysing.dat};
			\addplot[thick, each nth point=10, filter discard warning=false, unbounded coords=discard,mark=*, mark repeat=15] table[x=t,y expr=\thisrow{v_y}*(-1)] {data/slidingdroplet/data/poly2_coupled_nonlinear_0.001_0.0125_5.0_0.35_140.0_hysing.dat}; 
\nextgroupplot[
,ymin=0.4, ymax=2.0
		] 
			\addplot[thick, each nth point=10, filter discard warning=false, unbounded coords=discard] table[x=t,y expr=\thisrow{y_p1}] {data/slidingdroplet/data/obstacle100_coupled_nonlinear_0.001_0.0125_5.0_0.35_140.0_contactline.dat};
			\addplot[thick, each nth point=10, filter discard warning=false, unbounded coords=discard, mark=x, mark repeat=15, mark options={scale=2}] table[x=t,y expr=\thisrow{y_p1}] {data/slidingdroplet/data/obstacle100_decoupled_nonlinear_0.001_0.0125_5.0_0.35_140.0_contactline.dat};
			\addplot[thick, each nth point=10, filter discard warning=false, unbounded coords=discard, dashed] table[x=t,y expr=\thisrow{y_p1}] {data/slidingdroplet/data/obstacle100_decoupled_linear_0.001_0.0125_5.0_0.35_140.0_contactline.dat};
			\addplot[thick, each nth point=10, filter discard warning=false, unbounded coords=discard,mark=o, mark repeat=15] table[x=t,y expr=\thisrow{y_p1}] {data/slidingdroplet/data/obstacle10_decoupled_linear_0.001_0.0125_5.0_0.35_140.0_contactline.dat};
			\addplot[thick, each nth point=10, filter discard warning=false, unbounded coords=discard,mark=*, mark repeat=15] table[x=t,y expr=\thisrow{y_p1}] {data/slidingdroplet/data/poly2_coupled_nonlinear_0.001_0.0125_5.0_0.35_140.0_contactline.dat};
			\addplot[thick, each nth point=10, filter discard warning=false, unbounded coords=discard] table[x=t,y expr=\thisrow{y_p2}] {data/slidingdroplet/data/obstacle100_coupled_nonlinear_0.001_0.0125_5.0_0.35_140.0_contactline.dat};
			\addplot[thick, each nth point=10, filter discard warning=false, unbounded coords=discard, mark=x, mark repeat=15, mark options={scale=2}] table[x=t,y expr=\thisrow{y_p2}] {data/slidingdroplet/data/obstacle100_decoupled_nonlinear_0.001_0.0125_5.0_0.35_140.0_contactline.dat};
			\addplot[thick, each nth point=10, filter discard warning=false, unbounded coords=discard, dashed] table[x=t,y expr=\thisrow{y_p2}] {data/slidingdroplet/data/obstacle100_decoupled_linear_0.001_0.0125_5.0_0.35_140.0_contactline.dat};
			\addplot[thick, each nth point=10, filter discard warning=false, unbounded coords=discard,mark=o, mark repeat=15] table[x=t,y expr=\thisrow{y_p2}] {data/slidingdroplet/data/obstacle10_decoupled_linear_0.001_0.0125_5.0_0.35_140.0_contactline.dat};
\nextgroupplot[
ymin=0, ymax=90
		] 
			\addplot[thick, each nth point=1, filter discard warning=false, unbounded coords=discard] table[x=t,y expr=180-\thisrow{theta_p1}] {data/slidingdroplet/data/obstacle100_coupled_nonlinear_0.001_0.0125_5.0_0.35_140.0_contactline.dat};
			\addplot[thick, each nth point=30, filter discard warning=false, unbounded coords=discard, mark=x, mark repeat=15, mark options={scale=2}] table[x=t,y expr=180-\thisrow{theta_p1}] {data/slidingdroplet/data/obstacle100_decoupled_nonlinear_0.001_0.0125_5.0_0.35_140.0_contactline.dat};
			\addplot[thick, each nth point=30, filter discard warning=false, unbounded coords=discard, dashed] table[x=t,y expr=180-\thisrow{theta_p1}] {data/slidingdroplet/data/obstacle100_decoupled_linear_0.001_0.0125_5.0_0.35_140.0_contactline.dat};
			\addplot[thick, each nth point=30, filter discard warning=false, unbounded coords=discard,mark=o, mark repeat=15] table[x=t,y expr=180-\thisrow{theta_p1}] {data/slidingdroplet/data/obstacle10_decoupled_linear_0.001_0.0125_5.0_0.35_140.0_contactline.dat};
			\addplot[thick, each nth point=30, filter discard warning=false, unbounded coords=discard,mark=*, mark repeat=15] table[x=t,y expr=180-\thisrow{theta_p1}] {data/slidingdroplet/data/poly2_coupled_nonlinear_0.001_0.0125_5.0_0.35_140.0_contactline.dat};
			\addplot[thick, each nth point=1, filter discard warning=false, unbounded coords=discard] table[x=t,y expr=180-\thisrow{theta_p2}] {data/slidingdroplet/data/obstacle100_coupled_nonlinear_0.001_0.0125_5.0_0.35_140.0_contactline.dat};
			\addplot[thick, each nth point=30, filter discard warning=false, unbounded coords=discard, mark=x, mark repeat=15, mark options={scale=2}] table[x=t,y expr=180-\thisrow{theta_p2}] {data/slidingdroplet/data/obstacle100_decoupled_nonlinear_0.001_0.0125_5.0_0.35_140.0_contactline.dat};
			\addplot[thick, each nth point=30, filter discard warning=false, unbounded coords=discard, dashed] table[x=t,y expr=180-\thisrow{theta_p2}] {data/slidingdroplet/data/obstacle100_decoupled_linear_0.001_0.0125_5.0_0.35_140.0_contactline.dat};
			\addplot[thick, each nth point=30, filter discard warning=false, unbounded coords=discard,mark=o, mark repeat=15] table[x=t,y expr=180-\thisrow{theta_p2}] {data/slidingdroplet/data/obstacle10_decoupled_linear_0.001_0.0125_5.0_0.35_140.0_contactline.dat};
\nextgroupplot[
ylabel={$\thetaori_s=150\degree$, $r=0.35$, $l=140$}
,ymin=0, ymax=0.4
		] 
			\addplot[thick, each nth point=10, filter discard warning=false, unbounded coords=discard] table[x=t,y expr=\thisrow{v_y}*(-1)] {data/slidingdroplet/data/obstacle100_coupled_nonlinear_0.001_0.0125_150.0_0.35_140.0_hysing.dat};
			\addplot[thick, each nth point=10, filter discard warning=false, unbounded coords=discard, mark=x, mark repeat=15, mark options={scale=2}] table[x=t,y expr=\thisrow{v_y}*(-1)] {data/slidingdroplet/data/obstacle100_decoupled_nonlinear_0.001_0.0125_150.0_0.35_140.0_hysing.dat};
			\addplot[thick, each nth point=10, filter discard warning=false, unbounded coords=discard, dashed] table[x=t,y expr=\thisrow{v_y}*(-1)] {data/slidingdroplet/data/obstacle100_decoupled_linear_0.001_0.0125_150.0_0.35_140.0_hysing.dat};
			\addplot[thick, each nth point=10, filter discard warning=false, unbounded coords=discard,mark=o, mark repeat=15] table[x=t,y expr=\thisrow{v_y}*(-1)] {data/slidingdroplet/data/obstacle10_decoupled_linear_0.001_0.0125_150.0_0.35_140.0_hysing.dat};
			\addplot[thick, each nth point=10, filter discard warning=false, unbounded coords=discard,mark=*, mark repeat=15] table[x=t,y expr=\thisrow{v_y}*(-1)] {data/slidingdroplet/data/poly2_coupled_nonlinear_0.001_0.0125_150.0_0.35_140.0_hysing.dat}; 
\nextgroupplot[
,ymin=0.4, ymax=2.0
		] 
			\addplot[thick, each nth point=1, filter discard warning=false, unbounded coords=discard] table[x=t,y expr=\thisrow{y_p1}] {data/slidingdroplet/data/obstacle100_coupled_nonlinear_0.001_0.0125_150.0_0.35_140.0_contactline.dat};
			\addplot[thick, each nth point=10, filter discard warning=false, unbounded coords=discard, mark=x, mark repeat=15, mark options={scale=2}] table[x=t,y expr=\thisrow{y_p1}] {data/slidingdroplet/data/obstacle100_decoupled_nonlinear_0.001_0.0125_150.0_0.35_140.0_contactline.dat};
			\addplot[thick, each nth point=10, filter discard warning=false, unbounded coords=discard, dashed] table[x=t,y expr=\thisrow{y_p1}] {data/slidingdroplet/data/obstacle100_decoupled_linear_0.001_0.0125_150.0_0.35_140.0_contactline.dat};
			\addplot[thick, each nth point=10, filter discard warning=false, unbounded coords=discard,mark=o, mark repeat=15] table[x=t,y expr=\thisrow{y_p1}] {data/slidingdroplet/data/obstacle10_decoupled_linear_0.001_0.0125_150.0_0.35_140.0_contactline.dat};
			\addplot[thick, each nth point=10, filter discard warning=false, unbounded coords=discard,mark=*, mark repeat=15] table[x=t,y expr=\thisrow{y_p1}] {data/slidingdroplet/data/poly2_coupled_nonlinear_0.001_0.0125_150.0_0.35_140.0_contactline.dat};
			\addplot[thick, each nth point=1, filter discard warning=false, unbounded coords=discard] table[x=t,y expr=\thisrow{y_p2}] {data/slidingdroplet/data/obstacle100_coupled_nonlinear_0.001_0.0125_150.0_0.35_140.0_contactline.dat};
			\addplot[thick, each nth point=10, filter discard warning=false, unbounded coords=discard, mark=x, mark repeat=15, mark options={scale=2}] table[x=t,y expr=\thisrow{y_p2}] {data/slidingdroplet/data/obstacle100_decoupled_nonlinear_0.001_0.0125_150.0_0.35_140.0_contactline.dat};
			\addplot[thick, each nth point=10, filter discard warning=false, unbounded coords=discard, dashed] table[x=t,y expr=\thisrow{y_p2}] {data/slidingdroplet/data/obstacle100_decoupled_linear_0.001_0.0125_150.0_0.35_140.0_contactline.dat};
			\addplot[thick, each nth point=10, filter discard warning=false, unbounded coords=discard,mark=o, mark repeat=15] table[x=t,y expr=\thisrow{y_p2}] {data/slidingdroplet/data/obstacle10_decoupled_linear_0.001_0.0125_150.0_0.35_140.0_contactline.dat};
			\addplot[thick, each nth point=10, filter discard warning=false, unbounded coords=discard,mark=*, mark repeat=15] table[x=t,y expr=\thisrow{y_p2}] {data/slidingdroplet/data/poly2_coupled_nonlinear_0.001_0.0125_150.0_0.35_140.0_contactline.dat};		
		\nextgroupplot[
ymin=90, ymax=180
		] 
			\addplot[smooth, thick, each nth point=100, filter discard warning=false, unbounded coords=discard] table[x=t,y expr=\thisrow{theta_p1}] {data/slidingdroplet/data/obstacle100_coupled_nonlinear_0.001_0.0125_150.0_0.35_140.0_contactline.dat};
			\addplot[smooth, thick, each nth point=100, filter discard warning=false, unbounded coords=discard, mark=x, mark repeat=2, mark options={scale=2}] table[x=t,y expr=\thisrow{theta_p1}] {data/slidingdroplet/data/obstacle100_decoupled_nonlinear_0.001_0.0125_150.0_0.35_140.0_contactline.dat};
			\addplot[smooth, thick, each nth point=100, filter discard warning=false, unbounded coords=discard, dashed] table[x=t,y expr=\thisrow{theta_p1}] {data/slidingdroplet/data/obstacle100_decoupled_linear_0.001_0.0125_150.0_0.35_140.0_contactline.dat};
			\addplot[smooth, thick, each nth point=100, filter discard warning=false, unbounded coords=discard,mark=o, mark repeat=2] table[x=t,y expr=\thisrow{theta_p1}] {data/slidingdroplet/data/obstacle10_decoupled_linear_0.001_0.0125_150.0_0.35_140.0_contactline.dat};
			\addplot[smooth, thick, each nth point=100, filter discard warning=false, unbounded coords=discard,mark=*, mark repeat=2] table[x=t,y expr=\thisrow{theta_p1}] {data/slidingdroplet/data/poly2_coupled_nonlinear_0.001_0.0125_150.0_0.35_140.0_contactline.dat};
			\addplot[smooth, thick, each nth point=100, filter discard warning=false, unbounded coords=discard] table[x=t,y expr=\thisrow{theta_p2}] {data/slidingdroplet/data/obstacle100_coupled_nonlinear_0.001_0.0125_150.0_0.35_140.0_contactline.dat};
			\addplot[smooth, thick, each nth point=100, filter discard warning=false, unbounded coords=discard, mark=x, mark repeat=2, mark options={scale=2}] table[x=t,y expr=\thisrow{theta_p2}] {data/slidingdroplet/data/obstacle100_decoupled_nonlinear_0.001_0.0125_150.0_0.35_140.0_contactline.dat};
			\addplot[smooth, thick, each nth point=100, filter discard warning=false, unbounded coords=discard, dashed] table[x=t,y expr=\thisrow{theta_p2}] {data/slidingdroplet/data/obstacle100_decoupled_linear_0.001_0.0125_150.0_0.35_140.0_contactline.dat};
			\addplot[smooth, thick, each nth point=100, filter discard warning=false, unbounded coords=discard,mark=o, mark repeat=2] table[x=t,y expr=\thisrow{theta_p2}] {data/slidingdroplet/data/obstacle10_decoupled_linear_0.001_0.0125_150.0_0.35_140.0_contactline.dat};
			\addplot[smooth, thick, each nth point=100, filter discard warning=false, unbounded coords=discard,mark=*, mark repeat=2] table[x=t,y expr=\thisrow{theta_p2}] {data/slidingdroplet/data/poly2_coupled_nonlinear_0.001_0.0125_150.0_0.35_140.0_contactline.dat};
	\end{groupplot}
	
    	\node (l1) at ($(group c1r3.south)!0.5!(group c3r3.south)$)
      		[below, yshift=-2\pgfkeysvalueof{/pgfplots/every axis title shift}]
      		{\footnotesize \ref{grouplegend}};

	\end{tikzpicture}
	}\caption{Characteristic quantities calculated with the schemes from \Cref{sec:S}. 
	Three different surfaces ranging from super-hydrophilic ($5\degree$, middle) to super-hydrophobic ($150\degree$, bottom) are compared. The corresponding parameters can be found in~\cref{tab:sd_setup,tab:sd_results}}
	\label{fig:sd_quantities_comparison}
\end{figure}
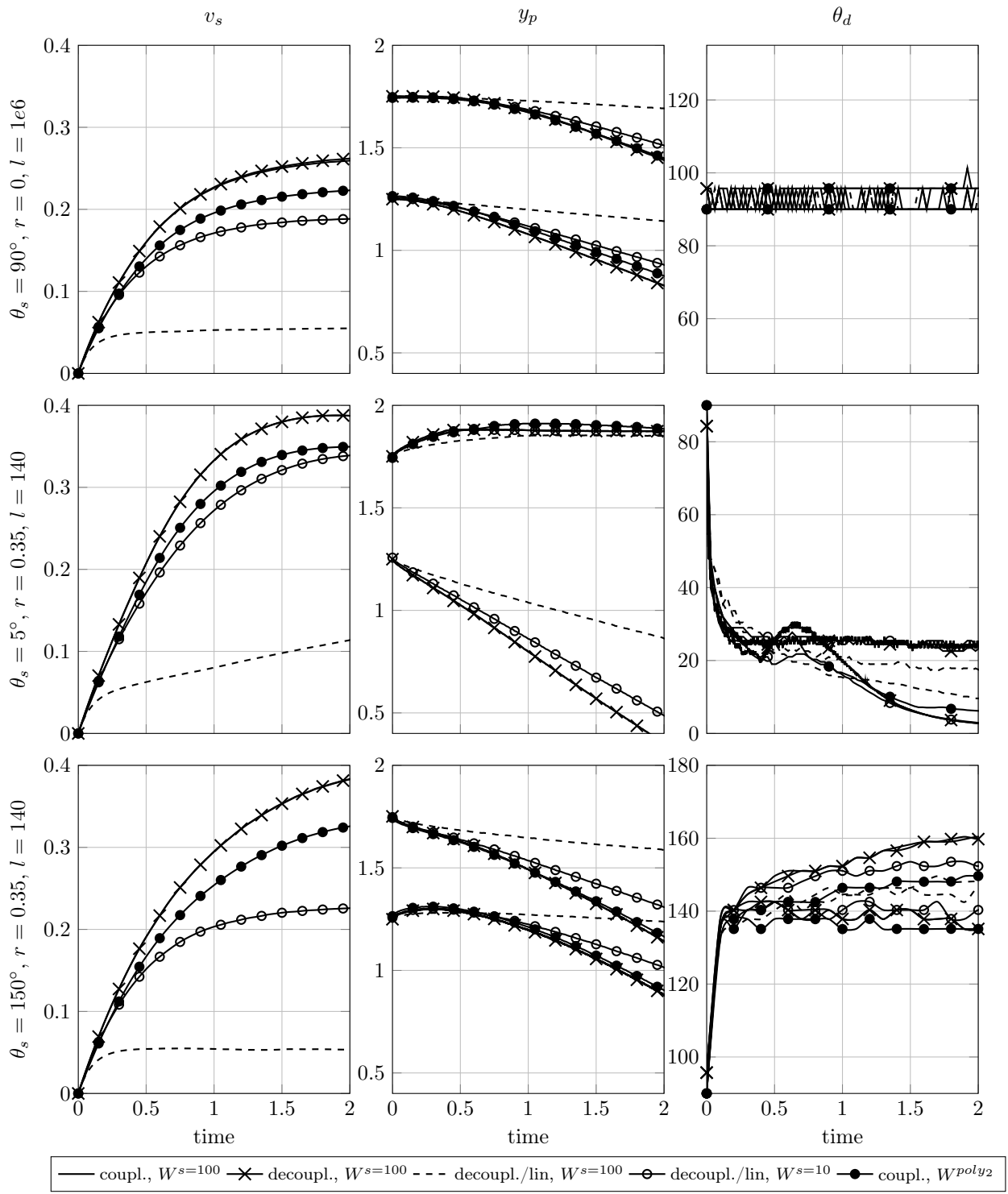
 \begin{table}
\makebox[\textwidth][c]{
    	\csvreader[table head=\toprule Bulk pot.&Deco./Lin.?&$\tau$&$h$&$\thetaori$&$r$&$l$& $y_c$ &$y_{p,a}$&$y_{p,r}$&  $v_{s}$    & $c$	&$\thetaori_{d,a}$&$\thetaori_{d,r}$\\\midrule,
		table foot=\bottomrule,
		head to column names, 
  		late after last line=\\,
		before reading=\centering\sisetup{table-number-alignment=center, table-format=1.3,round-mode=places,round-precision=4},
		tabular={ll
			S[scientific-notation = true,table-format=1.0e1]
			S[table-format=0.5, round-precision=5]
			S[table-format=1.1]
			S[table-format=1.1]
			S[table-format=1.1]
			S[table-format=1.3, round-precision=4]
			S[table-format=1.3, round-precision=4]
			S[table-format=1.3, round-precision=4]
			S[table-format=1.3, round-precision=4]
			S[table-format=1.3, round-precision=4]
			S[table-format=2.0, round-precision=0]
			S[table-format=2.0, round-precision=0]
			},
		late after line=\ifnumequal{\thecsvrow}{8}{\\\midrule}{\\}
			\ifnumequal{\thecsvrow}{15}{\midrule}{}
		]{data/slidingdroplet/slidingdroplet.csv}{}{\pot&
			\decoupled/\linear&
			\ifnumequal{\thecsvrow}{1}{{\multirow{7}{*}{\dt}}}{}
			\ifnumequal{\thecsvrow}{8}{{\multirow{7}{*}{\dt}}}{}
			\ifnumequal{\thecsvrow}{15}{{\multirow{7}{*}{\dt}}}{}&
			\ifnumequal{\thecsvrow}{1}{{\multirow{7}{*}{\dx}}}{}
			\ifnumequal{\thecsvrow}{8}{{\multirow{7}{*}{\dx}}}{}
			\ifnumequal{\thecsvrow}{15}{{\multirow{7}{*}{\dx}}}{}&
			\ifnumequal{\thecsvrow}{1}{{\multirow{7}{*}{\theta}}}{}
			\ifnumequal{\thecsvrow}{8}{{\multirow{7}{*}{\theta}}}{}
			\ifnumequal{\thecsvrow}{15}{{\multirow{7}{*}{\theta}}}{}&
			\ifnumequal{\thecsvrow}{1}{{\multirow{7}{*}{\rel}}}{}
			\ifnumequal{\thecsvrow}{8}{{\multirow{7}{*}{\rel}}}{}
			\ifnumequal{\thecsvrow}{15}{{\multirow{7}{*}{\rel}}}{}&
			\ifnumequal{\thecsvrow}{1}{{\multirow{7}{*}{\slip}}}{}
			\ifnumequal{\thecsvrow}{8}{{\multirow{7}{*}{\slip}}}{}
			\ifnumequal{\thecsvrow}{15}{{\multirow{7}{*}{\slip}}}{}&
			\yct&
			\yca&
			\ycr&
			\vmax&
			\cmin&
			\thetaa&
			\thetar}
}
\caption{
		Parameters and characteristic values for the sliding droplets simulations. 
		For $y_c$ and $c$ see caption of \Cref{tab:rb_results}. 
		In addition, $y_{p}$ and $\thetaori_d$ denote the position of the contact points and the dynamic contact angles.
		The first and second values correspond to the advancing and receding contact point respectively angle. 
		The slide velocity is $v_s$ and all values are reported at $t=2$.
} 
\label{tab:sd_results}
\end{table}
 
\paragraph{Thermodynamic consistency and comparison of dissipation rates}
We reveal the thermodynamic consistency of the schemes by calculating the evolution of the energy inequality using \cref{eq:S:EnergyInequ}. 
We use $\gamma^{cc}$ and set $\thetaori_{s}=150\degree$, $r=0.35$ and $l=140$.
We observe, that $\Delta_n^m$ is positive for all times, which justifies that the schemes are thermodynamically consistent, see \cref{eq:S:EnergyInequ}. 
Note that for $r>0$ we introduce an additional error as soon as we use the decoupling strategy.
We observe, that the physical dissipation for the three nonlinear schemes are close together,	while the physical dissipation for the linear model is strongly reduced.
	This corresponds to the reduced dynamics that are observed in \Cref{fig:sd_isolines} for the linear schemes,	especially for $\hp=100$.
	This influence can be reduced by using very small time steps and finer meshes, see the results for the rising bubble case in \Cref{tab:rb_results}.
	Comparing the numerical and physical dissipation of the nonlinear schemes, $\Delta_n^m$ only accounts for around 25\% of the total dissipation even for large time steps.
	Furthermore, by halving the time step $\tau$, the numerical dissipation
	$\Delta_n^m$ relative to the total dissipation $\Delta_n^m+\Delta_p^m$ is greatly reduced to around 12\%, see the grey plots in the bottom figure of Figure~\ref{fig:sd_energy}.
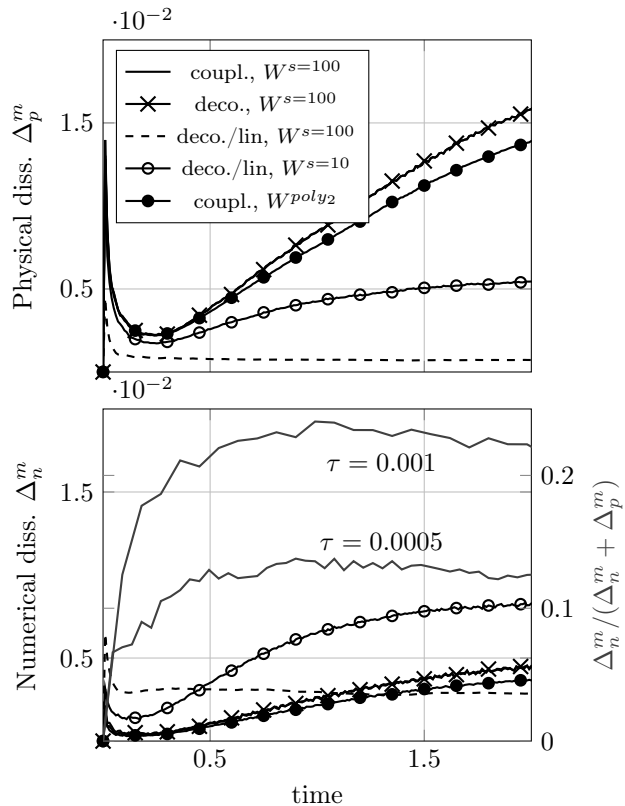
\begin{figure}[!ht]
	\tikzsetnextfilename{energy_plots}
	\centering
	\begin{tikzpicture}
		\begin{groupplot}[,group style={
			rows=2, columns=1, vertical sep=14pt,
			,x descriptions at=edge bottom
			,y descriptions at=edge left 
			}
                ,width=0.60\textwidth
                ,height=6cm
                ,grid=both
                ,xlabel={time} ,ymin=0, ymax=2e-2,xmin=0, xmax=2
		,xlabel near ticks, ylabel near ticks
		,xtick={0.5,1.5}
		,ytick={0.5e-2, 1.5e-2}
]
\nextgroupplot[	
		ylabel=Physical diss. $\Delta_p^m$,  
		legend style={font=\footnotesize},legend pos=north west, 
		] 
			\addplot[thick, each nth point=10, filter discard warning=false, unbounded coords=discard] table[x=t,y=deltapwogravity] {data/slidingdroplet/energies/ws100_c_nlin.dat}; 
			\addplot[thick, each nth point=10, filter discard warning=false, unbounded coords=discard, color=black,mark=x, mark repeat=15, mark options={scale=2}] table[x=t,y=deltapwogravity] {data/slidingdroplet/energies/ws100_de_nlin.dat}; 
			\addplot[thick, each nth point=10, filter discard warning=false, unbounded coords=discard,dashed] table[x=t,y=deltapwogravity] {data/slidingdroplet/energies/ws100_de_lin.dat}; 
			\addplot[thick, each nth point=10, filter discard warning=false, unbounded coords=discard,mark=o, mark repeat=15] table[x=t,y=deltapwogravity] {data/slidingdroplet/energies/ws10_de_lin.dat}; 
			\addplot[thick, each nth point=10, filter discard warning=false, unbounded coords=discard,mark=*, mark repeat=15] table[x=t,y=deltapwogravity] {data/slidingdroplet/energies/poly2_c_nlin.dat}; 
			\addlegendentry{coupl., $W^{\hp=100}$} 
			\addlegendentry{deco., $W^{\hp=100}$}
			\addlegendentry{deco./lin, $W^{\hp=100}$}
			\addlegendentry{deco./lin, $W^{\hp=10}$} 
			\addlegendentry{coupl., $W^{poly_2}$} 
\nextgroupplot[
		ylabel=Numerical diss. $\Delta_n^m$, 
		] 
			\addplot[thick, each nth point=10, filter discard warning=false, unbounded coords=discard] table[x=t,y=deltan] {data/slidingdroplet/energies/ws100_c_nlin.dat}; 
			\addplot[thick, each nth point=10, filter discard warning=false, unbounded coords=discard, color=black,mark=x, mark repeat=15, mark options={scale=2}] table[x=t,y=deltan] {data/slidingdroplet/energies/ws100_de_nlin.dat}; 
			\addplot[thick, each nth point=10, filter discard warning=false, unbounded coords=discard,dashed] table[x=t,y=deltan] {data/slidingdroplet/energies/ws100_de_lin.dat}; 
			\addplot[thick, each nth point=10, filter discard warning=false, unbounded coords=discard,mark=o, mark repeat=15] table[x=t,y=deltan] {data/slidingdroplet/energies/ws10_de_lin.dat}; 
			\addplot[thick, each nth point=10, filter discard warning=false, unbounded coords=discard,mark=*, mark repeat=15] table[x=t,y=deltan] {data/slidingdroplet/energies/poly2_c_nlin.dat}; 
		\end{groupplot}
\begin{groupplot}[,group style={
			rows=2, columns=1, vertical sep=14pt,
			,x descriptions at=edge bottom
			,y descriptions at=edge right
			}
                ,width=0.60\textwidth
                ,height=6cm
                ,grid=none
		,ymin=0, ymax=0.25,xmin=0, xmax=2
		,xlabel near ticks, ylabel near ticks
		,xtick=\empty, axis line style=transparent,
]
\nextgroupplot[
			ytick=\empty
		] 
\nextgroupplot[
				ylabel=\textcolor{darkgray}{$\Delta_n^m/(\Delta_n^m+\Delta_p^m)$}
		] 
				\addplot[darkgray, thick, each nth point=90, filter discard warning=false, unbounded coords=discard] table[x=t,y expr=\thisrow{deltan}/(\thisrow{deltan}+\thisrow{deltapwogravity})] {data/slidingdroplet/energies/ws100_de_nlin.dat}; 
				\addplot[darkgray, thick, each nth point=90, filter discard warning=false, unbounded coords=discard] table[x=t,y expr=\thisrow{deltan}/(\thisrow{deltan}+\thisrow{deltap})] {data/slidingdroplet/energies/ws100_de_nlin_dt0.0005.dat}; 
				\node (a) at (axis cs:1.3,0.21){$\tau=0.001$};
				\node (b) at (axis cs:1.3,0.15){$\tau=0.0005$};
		\end{groupplot}
	\end{tikzpicture}
	\caption{Validity of the energy inequality \cref{eq:S:EnergyInequ} (bottom) and the physical dissipation \cref{eq:S:phyDiss} (top) for the different schemes. 
	All simulations are performed with $\tau=0.001$, $h_{min}=0.0125$, $\thetaori_{s}=150\degree$, $r=0.35$ and $l=140$. 
	The coupled/nonlinear and decoupled/nonlinear scheme match almost perfectly and appear as a single graph.
	The numerical dissipation relative to the total dissipation is shown for two different time steps sizes in the bottom figure. 
	}
\label{fig:sd_energy}
\end{figure} 
 
\paragraph{Comparison of characteristic values obtained with smaller time steps $\tau$}
We show the behavior of the schemes for different time step sizes in Table~\ref{tab:sd_convergence}.
For small time steps, both the nonlinear schemes (coupled and decoupled) converge to the same characteristic values for the particular bulk energy potentials.
However, by comparing the values between the different bulk energy potentials, we note, that the differences are still relatively large even for small time steps.
Again, the linear scheme together with $W^\hp$ gives results far away from the solution obtained with the coupled schemes.
\begin{table}
	\csvreader[table head=\toprule Bulk pot.&Deco./Lin.?&$\tau$& $y_c$ &$y_{p,a}$&$y_{p,r}$&  $v_{s}$    & $c$	&$\thetaori_{d,a}$&$\thetaori_{d,r}$\\\midrule,
		table foot=\bottomrule,
		head to column names, 
		late after last line=\\,
		before reading=\centering\sisetup{table-number-alignment=center, table-format=1.3,round-mode=places,round-precision=4},
		tabular={ll
			S[scientific-notation = true,table-format=1.1e1, round-precision=1]
			S[table-format=1.3, round-precision=4]
			S[table-format=1.3, round-precision=4]
			S[table-format=1.3, round-precision=4]
			S[table-format=1.3, round-precision=4]
			S[table-format=1.3, round-precision=4]
			S[table-format=2.0, round-precision=0]
			S[table-format=2.0, round-precision=0]
			},
		late after line=\ifnumequal{\thecsvrow}{10}{\\\midrule}{\\}
			\ifnumequal{\thecsvrow}{13}{\midrule}{}
		]{data/slidingdroplet/convergence.csv}{}{\pot&
			\decoupled/\linear&
			\dt&
			\yct&
			\yca&
			\ycr&
			\vmax&
			\cmin&
			\thetaa&
			\thetar}
\caption{
		Characteristic values for the sliding droplet simulations obtained with different values of $\tau$ ($h_{min}=0.0125$, $\thetaori=150\degree$, $r=0.35$, $l=140$). For details about the setup and the characteristic values see the caption of \cref{tab:sd_results}.
}
\label{tab:sd_convergence}
\end{table}
 
\paragraph{Convergence to sharp-interface limit}
In \Cref{tab:sd_convergence_epsilon} we show solutions obtained with both bulk energy potentials and smaller $\epsilon$ on a very fine mesh ($h_{min}=0.0002$). 
	To reduce the computational effort, the inclination angle of the plate, see~\Cref{fig:sd_setup}, is set to zero and the simulation is already stopped at $t=0.2$.
	As expected for $b = \mathcal O(\epsilon)$ and $r = \mathcal O(1)$, see~\cite{2018-XuDiHu-SharpInterfaceLimit-NavierSlipBoundary}, the rate of convergence for both potentials is very slow and the sharp-interface limit is not reached yet.  
However, from our simulations we can conclude, that on the fine mesh both potentials give very similar results and exbibit the same behavior for smaller $\epsilon$. 
	For larger $\epsilon$, solutions obtained with $W^{\hp=100}$ seem to diverge slightly faster from the sharp-interface solution than solutions obtained with $W^{poly_2}$.
\begin{table}
	\csvreader[table head=\toprule 
			Bulk pot.&
			$\epsilon$&
			$b$&
			$y_p$
			\\\midrule,
		table foot=\bottomrule,
		head to column names, 
		late after last line=\\,
		before reading=\centering\sisetup{table-number-alignment=center, table-format=1.3,round-mode=places,round-precision=4},
		tabular={l
			S[table-format=0.2, round-precision=3]
			S[scientific-notation = true,table-format=1.0e1, round-precision=1]
			S[table-format=1.3, round-precision=4]
			},
		late after line=\ifnumequal{\thecsvrow}{5}{\\\midrule}{\\}
			\ifnumequal{\thecsvrow}{13}{\midrule}{}
		]{data/slidingdroplet/convergence_epsilon_recedingdrop.csv}{}{\potential&
			\eps&
			\mobility&
			\xcl
			}
\caption{
		Position of the contact line for a receding droplet (similar to the sliding droplet case with inclinication angle set to zero) obtained with different values of $\epsilon$ on a very fine mesh $h_{min}=0.0002$ ($\tau=0.001$, $\thetaori=150\degree$, $r=0.35$, $l=140$). The decoupled/nonlinear solution scheme is used. For details about the setup and the characteristic values see the caption of \cref{tab:sd_results}.
}
\label{tab:sd_convergence_epsilon}
\end{table}
   \section{Conclusion}
\label{sec:conclusion}
We compare the quality of the numerical results with three different schemes and two different bulk energy potentials.
For simulations without a moving contact line (rising bubble case), we find very similar results in the bulk independent of the coupling and linearization for both potentials.
However, the linearization of $W^\hp$ for large $\hp$ hinders the dynamics to a great extend but gets better for smaller $\hp$.
For the simulations including moving contact lines (sliding droplet case), the differences between the polynomial potential $W^{poly_2}$ and the relaxed double-obstacle potential $W^\hp$ are more pronounced.
Again, we observe a strong truncation of the allover dynamics using $W^\hp$ together with the linear scheme.
In both cases, the influence of the decoupling of the Navier--Stokes and Cahn--Hilliard system slightly depends on the time step size.
However, the decoupling has a negligible influence on the all-over dynamics even for larger time steps. 
Concerning the two tested bulk energy potentials, we observe, that both give in general physically sound results, but differences are still exists even for small time steps. 
The results and the behavior for smaller $\epsilon$ on a fine mesh are almost the same for both potentials.
For larger $\epsilon$, solutions obtained with $W^{\hp=100}$ seem to diverge slightly faster from a sharp-interface solution than with $W^{poly_2}$.

Summarizing our results, we find that
\begin{itemize}
	\item the decoupling strategy gives excellent results while the computational effort is significantly reduced compared to the fully coupled scheme,
	\item a further linearization of the Cahn--Hilliard system applying the stabilization is not recommended together with $W^\hp$ for large values of $\hp$, 
	\item 
	both bulk energy potentials produce sound and similar results in particular for a smaller interfacial thickness $\epsilon$.
\end{itemize}

To further judge whether one of the potentials lead to more accurate results, high fidelity sharp interface results on flows with moving contact lines (similar to the benchmark performed in~\cite{Hysing_Turek_quantitative_benchmark_computations_of_two_dimensional_bubble_dynamics}) are critical. It is our hope, that the presented work sparks further comparisons of diffuse and sharp interface models especially for the frequently observed and relevant case of sliding droplets.

\section*{Acknowledgment}
The authors thank Marion Dziwnik for helpful discussions on the scaling
of the contact line surface tensions. 

\bibliographystyle{elsarticle-num}

\begin{thebibliography}{10}
\expandafter\ifx\csname url\endcsname\relax
  \def\url#1{\texttt{#1}}\fi
\expandafter\ifx\csname urlprefix\endcsname\relax\def\urlprefix{URL }\fi
\expandafter\ifx\csname href\endcsname\relax
  \def\href#1#2{#2} \def\path#1{#1}\fi

\bibitem{Bonn2009}
D.~Bonn, J.~Eggers, J.~Indekeu, J.~Meunier, E.~Rolley,
  \href{https://link.aps.org/doi/10.1103/RevModPhys.81.739}{{Wetting and
  spreading}}, Rev. Mod. Phys. 81~(2) (2009) 739--805 (may 2009).
\newblock \href {https://doi.org/10.1103/RevModPhys.81.739}
  {\path{doi:10.1103/RevModPhys.81.739}}.

\bibitem{Snoeijer2013}
J.~H. Snoeijer, B.~Andreotti,
  \href{http://www.annualreviews.org/doi/abs/10.1146/annurev-fluid-011212-140734
  http://www.annualreviews.org/doi/10.1146/annurev-fluid-011212-140734}{{Moving
  Contact Lines: Scales, Regimes, and Dynamical Transitions}}, Annu. Rev. Fluid
  Mech. 45~(1) (2013) 269--292 (jan 2013).
\newblock \href {https://doi.org/10.1146/annurev-fluid-011212-140734}
  {\path{doi:10.1146/annurev-fluid-011212-140734}}.

\bibitem{Jacqmin2000}
D.~Jacqmin,
  \href{http://www.journals.cambridge.org/abstract{\_}S0022112099006874}{{Contact-line
  dynamics of a diffuse fluid interface}}, J. Fluid Mech. 402~(2000) (2000)
  S0022112099006874 (jan 2000).
\newblock \href {https://doi.org/10.1017/S0022112099006874}
  {\path{doi:10.1017/S0022112099006874}}.

\bibitem{Anderson1998}
D.~M. Anderson, G.~B. McFadden, A.~A. Wheeler,
  \href{http://www.annualreviews.org/doi/10.1146/annurev.fluid.30.1.139}{Diffuse-interface
  methods in fluid mechanics}, Annu. Rev. Fluid Mech. 30~(1) (1998) 139--165
  (jan 1998).
\newblock \href {https://doi.org/10.1146/annurev.fluid.30.1.139}
  {\path{doi:10.1146/annurev.fluid.30.1.139}}.

\bibitem{Carlson2012}
A.~Carlson, G.~Bellani, G.~Amberg, {Universality in dynamic wetting dominated
  by contact-line friction}, Phys. Rev. E - Stat. Nonlinear, Soft Matter Phys.
  85~(4) (2012) 1--5 (2012).
\newblock \href {http://arxiv.org/abs/1111.1214} {\path{arXiv:1111.1214}},
  \href {https://doi.org/10.1103/PhysRevE.85.045302}
  {\path{doi:10.1103/PhysRevE.85.045302}}.

\bibitem{Eddi2013}
A.~Eddi, K.~G. Winkels, J.~H. Snoeijer, {Short time dynamics of viscous drop
  spreading}, Phys. Fluids 25~(1) (2013).
\newblock \href {http://arxiv.org/abs/1209.6150} {\path{arXiv:1209.6150}},
  \href {https://doi.org/10.1063/1.4788693} {\path{doi:10.1063/1.4788693}}.

\bibitem{2006_QianWangShen_Variational_MovingContactLine__BoundaryConditons}
T.~Qian, X.-P. Wang, P.~Sheng, A variational approach to moving contact line
  hydrodynamics, Journal of Fluid Mechanics 564 (2006) 333--360 (2006).

\bibitem{AbelsGarckeGruen_CHNSmodell}
H.~Abels, H.~Garcke, G.~Gr{\"u}n,
  \href{http://www.worldscinet.com/m3as/22/2203/S0218202511500138.html}{Thermodynamically
  consistent, frame indifferent diffuse interface models for incompressible
  two-phase flows with different densities}, Mathematical Models and Methods in
  Applied Sciences 22~(3) (2012) 1150013(40) (March 2012).
\newblock \href {https://doi.org/10.1142/S0218202511500138}
  {\path{doi:10.1142/S0218202511500138}}.

\bibitem{GarckeHinzeKahle_CHNS_AGG_linearStableTimeDisc}
H.~Garcke, M.~Hinze, C.~Kahle,
  \href{http://www.sciencedirect.com/science/article/pii/S0168927415001324}{{A
  stable and linear time discretization for a thermodynamically consistent
  model for two-phase incompressible flow}}, Applied Numerical Mathematics 99
  (2016) 151--171 (January 2016).

\bibitem{Gruen_Klingbeil_CHNS_AGG_numeric}
G.~Gr{\"u}n, F.~Klingbeil,
  \href{http://www.sciencedirect.com/science/article/pii/S0021999113007043}{Two-phase
  flow with mass density contrast: Stable schemes for a thermodynamic
  consistent and frame indifferent diffuse interface model}, Journal of
  Computational Physics 257~(A) (2014) 708--725 (January 2014).

\bibitem{GruenMetzger__CHNS_decoupled}
G.~Gr{\"u}n, F.~{Guill\'en-Gonz\'ales}, S.~Metzger,
  \href{http://journals.cambridge.org/action/displayAbstract?fromPage=online&aid=10323216&fileId=S1815240616000633}{{On
  Fully Decoupled Convergent Schemes for Diffuse Interface Models for Two-Phase
  Flow with General Mass Densities}}, Communications in Computational Physics
  19~(5) (2016) 1473--1502 (May 2016).
\newblock \href {https://doi.org/10.4208/cicp.scpde14.39s}
  {\path{doi:10.4208/cicp.scpde14.39s}}.

\bibitem{Aland__time_integration_for_diffuse_interface}
S.~Aland,
  \href{http://www.sciencedirect.com/science/article/pii/S0021999114000102}{Time
  integration for diffuse interface models for two-phase flow}, Journal of
  Computational Physics 262 (2014) 58--71 (April 2014).
\newblock \href {https://doi.org/10.1016/j.jcp.2013.12.055}
  {\path{doi:10.1016/j.jcp.2013.12.055}}.

\bibitem{Tierra_Splitting_CHNS}
F.~{Guill\'en-Gonz\'ales}, G.~Tierra,
  \href{http://www.global-sci.org/jcm/}{{Splitting schemes for a
  Navier--Stokes--Cahn--Hilliard model for two fluids with different
  densities}}, Journal of Computational Mathematics 32~(6) (2014) 643--664
  (2014).
\newblock \href {https://doi.org/10.4208/jcm.1405-m4410}
  {\path{doi:10.4208/jcm.1405-m4410}}.

\bibitem{2011-HeGlowinskiWang-LeastSquaresCHNSMCL}
Q.~He, R.~Glowinski, X.-P. Wang, {A least-squares/finite element method for the
  numerical solution of the Navier--Stokes--Cahn--Hilliard system modeling the
  motion of the contact line}, Journal of Computational Physics 230~(12) (2011)
  4991--5009 (2011).
\newblock \href {https://doi.org/10.1016/j.jcp.2011.03.022}
  {\path{doi:10.1016/j.jcp.2011.03.022}}.

\bibitem{2012-GaoWang-gradientStableSchemePhaseFieldMCL}
M.~Gao, X.-P. Wang,
  \href{http://www.sciencedirect.com/science/article/pii/S0021999111006152}{{A
  gradient stable scheme for a phase field model for the moving contact line
  problem}}, Journal of Computational Physics 231~(4) (2012) 1372 -- 1386
  (2012).
\newblock \href {https://doi.org/https://doi.org/10.1016/j.jcp.2011.10.015}
  {\path{doi:https://doi.org/10.1016/j.jcp.2011.10.015}}.

\bibitem{2015-ShenYangYu-EnergyStableSchemesForCHMCL-Stabilization}
J.~Shen, X.~Yang, H.~Yu, {Efficient energy stable numerical schemes for a phase
  field moving contact line model}, Journal of Computational Physics 284 (2015)
  617--630 (2015).
\newblock \href {https://doi.org/10.1016/j.jcp.2014.12.046}
  {\path{doi:10.1016/j.jcp.2014.12.046}}.

\bibitem{AlandChen__MovingContactLine}
S.~Aland, F.~Chen, An efficient and energy stable scheme for a phase-field
  model for the moving contact line problem, International Journal for
  Numerical Methods in Fluids 81 (2016) 657--671 (2016).
\newblock \href {https://doi.org/10.1002/fld.4200}
  {\path{doi:10.1002/fld.4200}}.

\bibitem{YuYang_MovingContactLine_diffDensities}
H.~Yu, X.~Yang, Numerical approximations for a phase-field moving contact line
  model with variable densities and viscosities, Journal of Computational
  Physics~(334) (2017) 665--686 (2017).

\bibitem{ShenYang_CHNS_DingSpelt_Consistent}
J.~Shen, X.~Yang, \href{http://epubs.siam.org/doi/abs/10.1137/09075860X}{{A
  Phase-Field Model and its Numerical Approximation for Two-Phase
  Incompressible Flows with Different Densities and Viscosities}}, SIAM Journal
  on Scientific Computing 32~(3) (2010) 1159--1179 (2010).

\bibitem{HintermuellerHinzeTber}
M.~Hinterm\"uller, M.~Hinze, M.~H. Tber, {An adaptive finite element
  Moreau--Yosida-based solver for a non-smooth Cahn--Hilliard problem},
  Optimization Methods and Software 26~(4-5) (2011) 777--811 (2011).
\newblock \href {https://doi.org/10.1080/10556788.2010.549230}
  {\path{doi:10.1080/10556788.2010.549230}}.

\bibitem{2017_SPP1506_book_AlaHKN__CoparativeSurfactants}
S.~Aland, A.~Hahn, C.~Kahle, R.~N{\"u}rnberg,
  \href{https://doi.org/10.1007/978-3-319-56602-3_22}{{Comparative Simulations
  of Taylor Flow with Surfactants Based on Sharp- and Diffuse-Interface
  Methods}}, Springer International Publishing, Cham, 2017, pp. 639--661
  (2017).
\newblock \href {https://doi.org/10.1007/978-3-319-56602-3_22}
  {\path{doi:10.1007/978-3-319-56602-3_22}}.

\bibitem{AbelsBreit_weakSolution_nonNewtonian_DifferentDensities}
H.~Abels, D.~Breit,
  \href{http://iopscience.iop.org/article/10.1088/0951-7715/29/11/3426/meta}{{Weak
  Solutions for a Non-Newtonian Diffuse Interface Model with Different
  Densities}}, Nonlinearity 29 (2016) 3426--3453 (2016).
\newblock \href {https://doi.org/10.1088/0951-7715/29/11/3426}
  {\path{doi:10.1088/0951-7715/29/11/3426}}.

\bibitem{1995_CaffarelliMulder_LinftyForCahnHilliard}
L.~Caffarelli, N.~Muler, {An $L^{\infty}$ Bound for Solutions of the
  Cahn--Hilliard Equation}, Archive for Rational Mechanics and Analysis 133
  (1995) 129--144 (1995).

\bibitem{Gruen_convergence_stable_scheme_CHNS_AGG}
G.~Gr{\"u}n, \href{http://epubs.siam.org/doi/abs/10.1137/130908208}{On
  convergent schemes for diffuse interface models for two-phase flow of
  incompressible fluids with general mass densities}, SIAM Journal on Numerical
  Analysis 51~(6) (2013) 3036--3061 (2013).

\bibitem{AbelsDepnerGarcke_CHNS_AGG_exSol}
H.~Abels, D.~Depner, H.~Garcke, Existence of weak solutions for a diffuse
  interface model for two-phase flows of incompressible fluids with different
  densities, Journal of Mathematical Fluid Mechanics 15~(3) (2013) 453--480
  (September 2013).
\newblock \href {https://doi.org/10.1007/s00021-012-0118-x}
  {\path{doi:10.1007/s00021-012-0118-x}}.

\bibitem{AbelsDepnerGarcke_CHNS_AGG_exSol_degMob}
H.~Abels, D.~Depner, H.~Garcke,
  \href{http://www.sciencedirect.com/science/article/pii/S0294144913000243}{On
  an incompressible {Navier--Stokes / Cahn--Hilliard} system with degenerate
  mobility}, Annales de l'Institut Henri Poincar\'e (C) Non Linear Analysis
  30~(6) (2013) 1175--1190 (2013).

\bibitem{2016-GalGrasselliMiranville-CHNSMCL-EqualDensityExSol}
C.~G. Gal, M.~Grasselli, A.~Miranville,
  \href{https://doi.org/10.1007/s00526-016-0992-9}{Cahn--hilliard--navier--stokes
  systems with moving contact lines}, Calculus of Variations and Partial
  Differential Equations 55~(3) (2016) 50 (May 2016).
\newblock \href {https://doi.org/10.1007/s00526-016-0992-9}
  {\path{doi:10.1007/s00526-016-0992-9}}.

\bibitem{2017_ColliGilardiSprekels_CH_with_dynamicBoundary}
P.~Colli, G.~Gilardi, J.~Sprekels, {On a Cahn--Hilliard system with convection
  and dynamic boundary conditions}, Annali di Matematica Pura ed Applicata
  (2017) 1--31 (2017).

\bibitem{2018-XuDiHu-SharpInterfaceLimit-NavierSlipBoundary}
X.~Xu, Y.~Di, H.~Yu, {Sharp-interface limits of a phase-field model with a
  generalized Navier slip boundary condition for moving contact lines}, Journal
  of Fluid Mechanics 849 (2018) 805--833 (2018).
\newblock \href {https://doi.org/10.1017/jfm.2018.428}
  {\path{doi:10.1017/jfm.2018.428}}.

\bibitem{GonzalesTierra_linearSchemes_CH}
F.~{Guill\'en-Gonz\'alez}, G.~Tierra,
  \href{http://www.sciencedirect.com/science/article/pii/S0021999112005517}{{On
  linear schemes for a Cahn--Hilliard diffuse interface model}}, {Journal of
  Computational Physics} 234 (2013) 140--171 (2013).
\newblock \href {https://doi.org/10.1016/j.jcp.2012.09.020}
  {\path{doi:10.1016/j.jcp.2012.09.020}}.

\bibitem{BloweyElliott_I}
J.~F. Blowey, C.~M. Elliott, \href{http://people.maths.ox.ac.uk/ejam/}{The
  {Cahn--Hilliard} gradient theory for phase separation with non-smooth free
  energy. {Part I}: Mathematical analysis}, European Journal of Applied
  Mathematics 2 (1991) 233--280 (1991).

\bibitem{Hysing_Turek_quantitative_benchmark_computations_of_two_dimensional_bubble_dynamics}
S.~Hysing, S.~Turek, D.~Kuzmin, N.~Parolini, E.~Burman, S.~Ganesan, L.~Tobiska,
  \href{http://onlinelibrary.wiley.com/doi/10.1002/fld.1934/abstract}{Quantitative
  benchmark computations of two-dimensional bubble dynamics}, {International
  Journal for Numerical Methods in Fluids} 60~(11) (2009) 1259--1288 (2009).
\newblock \href {https://doi.org/10.1002/fld.1934}
  {\path{doi:10.1002/fld.1934}}.

\bibitem{2014-WuZwietenZee-StabilizedSecondOrderConvecSplittingCHmodels}
X.~Wu, G.~{van Zwieten}, K.~{van der Zee}, {Stabilized second-order convex
  splitting schemes for Cahn--Hilliard models with application to
  diffuse-interface tumor-growth models }, International Journal for Numerical
  Methods in Biomedical Engineering 30 (2014) 180--203 (2014).
\newblock \href {https://doi.org/10.1002/cnm.2597}
  {\path{doi:10.1002/cnm.2597}}.

\bibitem{2007_DingSpelt_WettingCondition_inDiffuseInterface_ContactLineMotion}
H.~Ding, P.~Spelt, Wetting condition in diffuse interface simulations of
  contact line motion, Physical Review E 75 (2007) 046708 (2007).

\bibitem{2018_BackofenVoigt_ConvexitySplittingPhaseField}
R.~Backofen, S.~Wise, M.~Salvalaglio, A.~Voigt, {Convexity splitting in a phase
  field model for surface diffusion}, International Journal of Numerical
  Analysis and Modeling (2018).

\bibitem{2011-BesierWollner-PressureApproximationIncompressibleFlowDynamicallySpatialMesh}
M.~Besier, W.~Wollner,
  \href{https://onlinelibrary.wiley.com/doi/full/10.1002/fld.2625}{{On the
  pressure approximation in nonstationary incompressible flow simulations on
  dynamically varying spatial meshes}}, International Journal for Numerical
  Methods in Fluids (2011).
\newblock \href {https://doi.org/10.1002/fld.2625}
  {\path{doi:10.1002/fld.2625}}.

\bibitem{Minjeaud_decoupling_CHNS}
S.~Minjeaud,
  \href{http://onlinelibrary.wiley.com/wol1/doi/10.1002/num.21721/full}{{An
  unconditionally stable uncoupled scheme for a triphasic
  Cahn--Hilliard/Navier--Stokes model}}, Numerical Methods for Partial
  Differential Equations 29~(2) (2013) 584--618 (March 2013).
\newblock \href {https://doi.org/10.1002/num.21721}
  {\path{doi:10.1002/num.21721}}.

\bibitem{KayStylesWelford}
D.~Kay, V.~Styles, R.~Welford,
  \href{http://www.ems-ph.org/journals/show_issue.php?issn=1463-9963&vol=10&iss=1}{Finite
  element approximation of a {Cahn--Hilliard--Navier--Stokes} system},
  Interfaces and Free Boundaries 10~(1) (2008) 15--43 (2008).

\bibitem{2014-Tierra_SecondOrder_and_Adaptivity}
F.~{Guill\'en-Gonz\'alez}, G.~Tierra,
  \href{http://www.sciencedirect.com/science/article/pii/S0898122114003320}{{Second
  order schemes and time-step adaptivity for Allen--Cahn and Cahn--Hilliard
  models}}, Computers and Mathematics with Applications 68~(8) (2014) 821--846
  (2014).

\bibitem{2016-DiegelWangWise-StabilityConvergenceSecondOrderSchemeCH}
A.~E. Diegel, C.~Wang, S.~M. Wiese, {Stability and convergence of a
  second-order mixed finite element method for the Cahn--Hilliard equation},
  IMA Journal of Numerical Analyis 36 (2016) 1867--1897 (2016).
\newblock \href {https://doi.org/10.1093/imanum/drv065}
  {\path{doi:10.1093/imanum/drv065}}.

\bibitem{2018-WangYu-SecondOrderCahnHilliardStabilized}
L.~Wang, H.~Yu, {On Efficient Second Order Stabilized Semi-implicit Schemes for
  the Cahn--Hilliard Phase-Field Equation}, Journal of Scientific Computing 77
  (2018) 1185--1209 (2018).
\newblock \href {https://doi.org/10.1007/s10915-018-0746-2}
  {\path{doi:10.1007/s10915-018-0746-2}}.

\bibitem{2017-YangJu-InvariantEnergyQuadratization}
X.~Yang, L.~Ju, {Linear and unconditionally energy stable schemes for the
  binary fluid-surfactant phase field model}, Computational Methods in Applied
  Mechanics and Engineering 318 (2017) 1005--1029 (2017).

\bibitem{2018-YangYu-SecondOrderCHMCL_IEQ}
X.~Yang, H.~Yu, {Efficient second order unconditionally stable schemes for a
  phase field moving contact line model using an invariant energy
  quadratization approach}, SIAM Journal on Scientific Computing 40~(3) (2018)
  B889--B914 (2018).

\bibitem{fenics1}
M.~Aln{\ae}s, J.~Blechta, J.~Hake, A.~Johansson, B.~Kehlet, A.~Logg,
  C.~Richardson, J.~Ring, M.~Rognes, G.~Wells,
  \href{http://journals.ub.uni-heidelberg.de/index.php/ans/article/view/20553}{The
  fenics project version 1.5}, Archive of Numerical Software 3~(100) (2015).
\newblock \href {https://doi.org/10.11588/ans.2015.100.20553}
  {\path{doi:10.11588/ans.2015.100.20553}}.

\bibitem{fenics_book}
A.~Logg, K.-A. Mardal, G.~Wells (Eds.), Automated Solution of Differential
  Equations by the Finite Element Method - The FEniCS Book, Vol.~84 of Lecture
  Notes in Computational Science and Engineering, Springer, 2012 (2012).
\newblock \href {https://doi.org/10.1007/978-3-642-23099-8}
  {\path{doi:10.1007/978-3-642-23099-8}}.

\bibitem{petsc_webpage}
S.~Balay, S.~Abhyankar, M.~F. Adams, J.~Brown, P.~Brune, K.~Buschelman,
  L.~Dalcin, V.~Eijkhout, W.~D. Gropp, D.~Kaushik, M.~G. Knepley, D.~A. May,
  L.~C. McInnes, R.~T. Mills, T.~Munson, K.~Rupp, P.~Sanan, B.~F. Smith,
  S.~Zampini, H.~Zhang, H.~Zhang.
\newblock \href{http://www.mcs.anl.gov/petsc}{{PETS}c {W}eb page} [online]
  (2018).

\bibitem{petsc-user-ref}
S.~Balay, S.~Abhyankar, M.~F. Adams, J.~Brown, P.~Brune, K.~Buschelman,
  L.~Dalcin, V.~Eijkhout, W.~D. Gropp, D.~Kaushik, M.~G. Knepley, D.~A. May,
  L.~C. McInnes, R.~T. Mills, T.~Munson, K.~Rupp, P.~Sanan, B.~F. Smith,
  S.~Zampini, H.~Zhang, H.~Zhang, \href{http://www.mcs.anl.gov/petsc}{{PETS}c
  users manual}, Tech. Rep. ANL-95/11 - Revision 3.9, Argonne National
  Laboratory (2018).

\bibitem{petsc-efficient}
S.~Balay, W.~D. Gropp, L.~C. McInnes, B.~F. Smith, Efficient management of
  parallelism in object oriented numerical software libraries, in: E.~Arge,
  A.~M. Bruaset, H.~P. Langtangen (Eds.), Modern Software Tools in Scientific
  Computing, Birkh{\"{a}}user Press, 1997, pp. 163--202 (1997).

\bibitem{mumps_1}
P.~R. Amestoy, I.~S. Duff, J.~Koster, J.-Y. L'Excellent, A fully asynchronous
  multifrontal solver using distributed dynamic scheduling, SIAM Journal on
  Matrix Analysis and Applications 23~(1) (2001) 15--41 (2001).

\bibitem{mumps_2}
P.~R. Amestoy, A.~Guermouche, J.-Y. L'Excellent, S.~Pralet, Hybrid scheduling
  for the parallel solution of linear systems, Parallel Computing 32~(2) (2006)
  136--156 (2006).

\bibitem{Aland_Voigt_bubble_benchmark}
S.~Aland, A.~Voigt,
  \href{http://onlinelibrary.wiley.com/doi/10.1002/fld.2611/pdf}{Benchmark
  computations of diffuse interface models for two-dimensional bubble
  dynamics}, {International Journal for Numerical Methods in Fluids} 69 (2012)
  747--761 (2012).
\newblock \href {https://doi.org/10.1002/fld.2611}
  {\path{doi:10.1002/fld.2611}}.

\bibitem{Khatavkar2006}
V.~V. Khatavkar, P.~D. Anderson, H.~E. Meijer, {On scaling of diffuse-interface
  models}, Chem. Eng. Sci. 61~(8) (2006) 2364--2378 (2006).
\newblock \href {https://doi.org/10.1016/j.ces.2005.10.035}
  {\path{doi:10.1016/j.ces.2005.10.035}}.

\bibitem{Donaldson2011}
A.~A. Donaldson, D.~M. Kirpalani, A.~Macchi,
  \href{http://dx.doi.org/10.1016/j.ijmultiphaseflow.2011.02.002}{{Diffuse
  interface tracking of immiscible fluids: Improving phase continuity through
  free energy density selection}}, Int. J. Multiph. Flow 37~(7) (2011) 777--787
  (2011).
\newblock \href {https://doi.org/10.1016/j.ijmultiphaseflow.2011.02.002}
  {\path{doi:10.1016/j.ijmultiphaseflow.2011.02.002}}.

\bibitem{Kahle_Linfty_bound}
C.~Kahle, {An $L^\infty$ bound for the Cahn--Hilliard equation with relaxed
  non-smooth free energy density}, International Journal of Numerical Analysis
  and Modeling 14~(2) (2017) 243--254 (2017).

\bibitem{Bosch2016}
J.~Bosch, C.~Kahle, M.~Stoll,
  \href{http://www.global-sci.com/intro/article_detail/cicp/10540.html}{Preconditioning
  of a coupled cahn--hilliard navier--stokes system}, Communications in
  Computational Physics 23~(2) (2018) 603--628 (2018).
\newblock \href {https://doi.org/10.4208/cicp.OA-2017-0037}
  {\path{doi:10.4208/cicp.OA-2017-0037}}.

\bibitem{Law2014}
K.-Y. Law, \href{http://pubs.acs.org/doi/10.1021/jz402762h}{{Definitions for
  Hydrophilicity, Hydrophobicity, and Superhydrophobicity: Getting the Basics
  Right}}, J. Phys. Chem. Lett. 5~(4) (2014) 686--688 (2014).
\newblock \href {https://doi.org/10.1021/jz402762h}
  {\path{doi:10.1021/jz402762h}}.

\bibitem{Sibley2013}
D.~N. Sibley, A.~Nold, N.~Savva, S.~Kalliadasis, {On the moving contact line
  singularity: Asymptotics of a diffuse-interface model}, Eur. Phys. J. E
  36~(3) (2013).
\newblock \href {http://arxiv.org/abs/arXiv:1210.1724v2}
  {\path{arXiv:arXiv:1210.1724v2}}, \href
  {https://doi.org/10.1140/epje/i2013-13026-y}
  {\path{doi:10.1140/epje/i2013-13026-y}}.

\bibitem{Omori2017}
T.~Omori, T.~Kajishima, {Apparent and microscopic dynamic contact angles in
  confined flows}, Phys. Fluids 29~(11) (2017).
\newblock \href {https://doi.org/10.1063/1.4992014}
  {\path{doi:10.1063/1.4992014}}.

\end{thebibliography}

\end{document}